\documentclass{article}
\usepackage{graphicx} 
\usepackage{authblk}
\usepackage[T1]{fontenc}
\usepackage{soul, comment} 
\usepackage{phfqit}
\usepackage{proba} 
\usepackage[sort,nocompress]{cite}
\usepackage{amssymb}
\usepackage{amsmath}
\usepackage{amsthm}
\usepackage{amsfonts}
\usepackage{mathtools}
\usepackage{xspace}
\usepackage{bm}
\usepackage{nicefrac}
\usepackage{commath}
\usepackage{bbm}
\usepackage{boxedminipage}
\usepackage{xparse}
\usepackage[usenames,dvipsnames]{color}
\usepackage[dvipsnames]{xcolor}
\usepackage{float}
\usepackage{multirow}
\usepackage{makecell}
\usepackage{graphicx}
\usepackage{caption}
\usepackage{subcaption}
\usepackage{ifthen}
\usepackage{thm-restate}
\usepackage{algorithm}
\usepackage[noend]{algpseudocode}
\usepackage{colortbl} 
\usepackage{fancyhdr}   
\usepackage{hyperref}
    \hypersetup{ colorlinks=true, linkcolor=blue, citecolor=magenta, urlcolor=blue,}
\usepackage{fullpage}
\usepackage[utf8]{inputenc}
\usepackage{environ}
\usepackage{mdframed}
\usepackage{cleveref}
\usepackage[short]{optidef} 
\usepackage{enumitem}
\usepackage{tikz}
\usetikzlibrary{shapes}
\usetikzlibrary{positioning}
\usetikzlibrary{fit}
\usetikzlibrary{graphs,graphs.standard}
\usepackage{enumitem}

\newtheorem{proposition}{Proposition}[section]
\newtheorem{lemma}{Lemma}[section]

\newtheorem{theorem}{Theorem}[section]
\newtheorem{corollary}{Corollary}[section]

\newtheorem{definition}{Definition}[section]
\newtheorem{conjecture}{Conjecture}[section]

\usepackage{enumitem}

\NewEnviron{problem}[1]{%
	\begin{center}\fbox{\parbox{6in}{%
				{\centering\scshape #1\par}%
				\parskip=1ex
				\everypar{\hangindent=1em}%
				\BODY
}}\end{center}}

\title{A Polynomial Time Characterization For Strongly EFX Orientable Graphs}
\author{Jinghan A. Zeng \footnote{Email: \texttt{jazeng2@illinois.edu}. Siebel School of Computing and Data Science, University of Illinois Urbana-Champaign. The author is supported by an NSF Graduate Research Fellowship.}}
\date{September 2026}

\begin{document}

\maketitle
\begin{abstract}
    Discrete fair division is the problem of dividing a discrete set of goods among agents in a fair manner. In this setting, one of the most sought-after notions of fairness is envy-freeness up to any good (EFX). In 2023, Christodoulou, Fiat, Koutsoupias, and Sgouritsa introduced the idea of a graphical valuation, where the fair division problem is represented by a simple graph where vertices are the agents and the edges are the goods, and each vertex only values incident edges. They showed that an EFX allocation always exists, while determining the existence of an EFX orientation is NP-hard. They posed a open question of determining which graphs always admit an EFX orientation regardless of valuation. 

    These graphs, called strongly EFX orientable graphs, were first studied by Zeng and Mehta in 2025, who demonstrated that all such graphs have chromatic number at most 3, and bipartite graphs always admit an EFX orientation regardless of valuation. In this manuscript, we finish resolving this question by giving a polynomial time characterization of strongly EFX orientable graphs. In particular, we show that a connected graph $G$ is strongly EFX orientable if and only if either of the following is true: (1) $G$ is bipartite, or (2) the block decomposition of $G$ contains exactly one nonbipartite block $B$, and there exists a vertex $v \in B$ such that the degree of $v$ within $B$ is 2 and $G-v$ is bipartite. This proof was discovered by AI, with human intervention to break the problem into the appropriate subproblems. 
\end{abstract}
\noindent{\bf AI Disclaimer:} The proof of the characterization was generated by GPT-6 Astra, with human contribution to break the problem down into the appropriate cases. This manuscript is entirely human-written, with the proofs explained and verified by the author. The author takes full responsibility for these proofs.
\section{Introduction}
One major problem in algorithmic game theory is the question of how to allocate goods and resources to agents in a \emph{fair} manner. In the discrete fair division setting, there are $m$ goods and $n$ agents, and an allocation is a partition (with parts allowed to be empty) $[m] = X_1 \sqcup X_2...\sqcup X_n$ where $X_i$ is a \emph{bundle} that is \emph{assigned} to agent $i$. Each agent $i \in [n]$ has its own opinion on how much their bundle is worth, represented by a function $f_i:2^{[m]} \rightarrow \mathbb{R}_{\geq 0}$, mapping each subset of the goods to a numerical value. We assume that for all $i \in [n]$, $f_i$ is monotone, that is $X \subseteq Y$ implies $f_i(X) \leq f_i(Y)$ for $X,Y \subseteq [m]$. In simpler terms, a monotone valuation means that giving more items to the agent cannot make them unhappier. 

One of the most desirable definitions of "fair" is \emph{envy-freeness}, where no agent values another agent's bundle over their own. In formal terms, an allocation is said to be envy-free if for every $i,j$, we have that $f_i(X_i) \geq f_i(X_j)$. Clearly, envy-freeness is not always possible, as dividing one good among two people means that the person who did not receive the good will envy the other person that did. The second best thing we want to achieve is \emph{envy-freeness up to any good} \cite{Caragiannis2019}, also abbreviated as EFX, which means that each agent does not envy any strict subset of another agent's bundle. So this means that for all $i,j$, for all $X \subset X_j$ we have $f_i(X_i) \geq f_i(X)$. A recent counterexample \cite{akrami2026counterexampleefxnge} shows that an EFX allocation does not always have to exist for monotone valuations. 

In 2023, the setting of fair division over a graph \cite{Christodoulou23} was studied by Christodoulou, Fiat, Koutsoupias, and Sgouritsa. In this setting, the vertices are the agents, the edges are the goods, and each vertex only values edges incident to it. The authors showed that EFX allocations always exist in this setting, but determining whether an EFX orientation, which is an EFX allocation where each edge can only be assigned to its endpoint vertices, is NP-hard. The authors posed the following open question, asking which graphs will always admit an EFX orientation for any valuation on the edges. 

These graphs, later coined \emph{strongly EFX-orientable graphs}, were first studied in \cite{ZM25}, where it was shown that all bipartite graphs are strongly EFX-orientable, and all strongly EFX orientable graphs had a chromatic number of at most 3. In this manuscript, we completely resolve this question, giving a characterization of strongly EFX-orientable graphs, which shows that one can decide whether a graph is strongly EFX-orientable in polynomial time. The following characterization is below. Throughout this paper, we will assume unless otherwise stated, that all graphs are simple. We can assume, without loss of generality, that we are only concerned with characterizing connected graphs, because we can handle EFX orientations on disconnected graphs by handling each connected component separately.  

\begin{theorem}
    A connected graph $G$ is strongly EFX orientable if and only if either of the following is true:
    \begin{enumerate}
        \item $G$ is bipartite, 
        \item The block decomposition of $G$ has exactly one nonbipartite block $B$, and there exists some $v \in B$ with $d_B(v) = 2$ such that $G-v$ is bipartite, where $d_B(v)$ denotes the degree of $v$ in $B$.
    \end{enumerate}
\end{theorem}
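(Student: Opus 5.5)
We prove the two directions separately. Throughout we use the earlier results of \cite{ZM25}: every bipartite graph is strongly EFX orientable, and strong EFX orientability is closed under taking subgraphs. The latter holds because giving edges outside a subgraph $H$ value zero to both endpoints lets any EFX orientation of $G$ restrict to one of $H$; we will re-verify this under monotone valuations if it is not stated explicitly. Condition~1 is then covered by \cite{ZM25}, so the sufficiency direction only needs condition~2.

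\textbf{Sufficiency of condition 2.} Let $v\in B$ have $d_B(v)=2$ with neighbours $a,b$ in $B$, and suppose $G-v$ is bipartite. Since $B$ is the only nonbipartite block, every odd cycle of $G$ lies in $B$ and passes through $v$, hence uses both edges $va$ and $vb$. Our plan is to reduce to the bipartite case. First, among the two edges $va,vb$, agent $v$ has a favourite, say $va$ (ties broken arbitrarily). We orient $va$ toward $v$ and give $v$ every other edge at $v$ that lies outside $B$ and that $v$ wants, in a way that keeps $v$ EFX-satisfied. We then orient $vb$ toward $b$. Now delete $v$ and treat $b$'s valuation as shifted by the fixed good $vb$; that is, $b$ already holds $vb$. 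The remaining graph $G-v$ is bipartite, and we run the bipartite EFX orientation algorithm of \cite{ZM25}, generalized to allow prescribed pre-assigned edges at a vertex. This is where the main technical work sits.

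The alternative is to orient $vb$ toward $v$ and let $a$ handle things. We expect a case analysis based on whether $v$ envies $b$'s bundle; if it does, we swap which of $a,b$ receives its $v$-edge. The key fact we need is that $v$ has exactly two $B$-edges. Consequently $v$'s bundle inside $B$ is a single edge, and EFX for $v$ only requires that $v$ prefers its own good to any proper subset of a neighbour's bundle. Cut vertices let us handle the bipartite blocks hanging off $B$ independently, by processing the block tree from $B$ outward and fixing each cut vertex's existing bundle as a prescribed good. We expect to prove a lemma of the form ``bipartite graphs admit EFX orientations even when one vertex is pre-assigned one good,'' whose proof reuses the matching/alternating structure from \cite{ZM25}.

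\textbf{Necessity.} By subgraph closure, it suffices to exhibit a finite list of minimal obstructions, give each a bad valuation, and show that every graph violating both conditions contains one of them as a subgraph. The first family is two vertex-disjoint odd cycles, or more generally two odd cycles in different blocks; this handles the case of two nonbipartite blocks. An odd cycle with a pendant path attached should not be a problem, since such a graph satisfies condition~2. Within a single nonbipartite block $B$, the hypothesis that no vertex $v$ of $B$-degree $2$ makes $G-v$ bipartite leaves two cases. Either every odd-cycle transversal vertex has degree at least $3$ in $B$, or there is no single-vertex odd-cycle transversal at all. In the latter case, 2-connectivity and a Lovász-type characterization of graphs without a vertex meeting all odd cycles give, as a subgraph, either two odd cycles sharing at most one vertex or an odd subdivision of $K_4$. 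In the former case we get an odd cycle $C$ with a vertex $v$ and a $B$-path from $v$ to $C$; this yields a theta graph in which $v$ has degree $3$ and all odd cycles pass through $v$. For each obstruction we design valuations with a few heavy edges that force a contradiction, following the chromatic-number $>3$ counterexamples of \cite{ZM25}. We expect the main obstacle to be the theta-graph obstruction and the structural reduction to finitely many subdivided families. The counterexample must work for every subdivision length, so we will use valuations in which path-interior vertices are indifferent except along a chain of light edges that propagates the forced orientation.
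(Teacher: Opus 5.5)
\textbf{Main gap: necessity when $B$ has a one-vertex odd cycle transversal but every such vertex has degree at least $3$.} In this case your necessity argument fails.

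The theta graph you extract is not an obstruction. Its two odd cycles share one of the three branch paths. If that shared path has an interior vertex $u$, then $d(u)=2$ and deleting $u$ leaves a bipartite graph. So the theta graph satisfies condition~2 and is strongly EFX orientable. Only when the shared path is a single edge do you get a forbidden configuration.

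No finite list of obstructions up to subdivision can handle this case either. For $k\ge 3$, let $R_k$ have vertices $x_0,\dots,x_{k-1}$. Join $x_{i-1}$ and $x_i$ by two internally disjoint paths of length $2$ for $1\le i\le k-1$, and add the edge $x_{k-1}x_0$. Then:
\begin{itemize}
\item every odd cycle runs around the whole ring, so $\tau(R_k)=1$ and the transversal vertices are exactly the $x_i$, of degree $3$ or $4$;
\item no two odd cycles share at most one vertex, and there is no odd $K_4$ (since $\tau(R_k)=1$);
\item in every theta subgraph, the two odd cycles share a path of length $2k-3\ge 3$;
\item deleting any edge either makes $R_k$ bipartite or leaves a degree-$2$ vertex on every odd cycle.
\end{itemize}
Hence $R_k$ is a subgraph-minimal non-strongly-EFX-orientable graph with exactly $k$ vertices of degree at least $3$. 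So no fixed finite family of subdivided obstructions covers all $R_k$.

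The missing idea is the paper's global argument. Split a transversal vertex $v$ into $s$ and $t$ to obtain a bipartite $H'$. If $H'$ is $2$-connected, two $s$--$t$ paths give two odd cycles meeting only at $v$. Otherwise the block--cut tree of $H'$ is an $s$--$t$ path whose cut vertices lie in $T(H)$, and the single-edge blocks form a matching. A $0/1/2$ valuation on this ring then has no EFX orientation, for the following reason. A ``happy/unhappy'' label must be preserved across even blocks and flipped across odd ones, while the total parity around the ring is odd. Double subdivision of zero edges lifts this to $H$.

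\textbf{The other two parts are only sketched.} In the $\tau\ge 2$ case, the dichotomy you invoke (two odd cycles sharing at most one vertex, or an odd $K_4$-subdivision) is true. However, you give no argument for it, and the appeal to a ``Lov\'asz-type characterization'' is not specific enough to count as one. The paper proves it using:
\begin{itemize}
\item Tutte's cycle space theorem;
\item an edge-minimal counterexample analysed along $2$-cuts;
\item suppression of degree-$2$ vertices to reach a $3$-connected graph;
\item a switching argument.
\end{itemize}
It also needs, for each parity pattern of an odd $K_4$, an explicit star-forest obstruction that survives double subdivision.

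In the sufficiency direction, your concrete step of letting $v$ keep its better $B$-edge $va$ together with other edges it wants breaks EFX. Take a triangle $vab$ with pendant edge $vw$, where $v$ values $vw$ most and $a$ values $va$ most. Then $v$ holding $\{va,vw\}$ makes $a$ envy the strict subset $\{va\}$. The ``pre-assigned good'' lemma you flag as the main work is easy: put $b$ on the receiving side of the $2$-colouring.

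What is actually needed is the paper's case split:
\begin{itemize}
\item If $va$ is $v$'s overall favourite, give $v$ only $va$ and run the bipartite algorithm on $G-va$, with $a$ and $v$ on the picking side.
\item Otherwise $v$ is a cut vertex. Split $G$ at $v$ into the part $G_2+v$ containing $B$ and the rest $G_1+v$.
\begin{itemize}
\item If $a$ does not pick $va$, let $v$ keep $va$ and be an unenvied receiver in $G_1+v$.
\item If $a$ picks $va$, give $va$ to $a$ and $vb$ to $b$, and let $v$ pick its global favourite in $G_1+v$. Then $v$ holds a single edge and does not envy $b$.
\end{itemize}
\end{itemize}
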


\section{Preliminaries}

\subsection{Fair Division Definitions}

In this section, we give the relevant definitions and background information to understand both the characterization and its proof. A fair division problem is represented by $(F,m,n)$ with $n$ agents, $m$ goods, and $F=\{f_1,f_2...f_n\}$ are the valuation functions which map $2^{[m]}$ to $\mathbb{R}_{\geq 0}$. We assume all valuation functions are monotone, that is for any $i \in [n]$ and $X \subseteq Y \subseteq [m]$, we have $f_i(X) \leq f_i(Y)$. A valuation function is said to be additive if for any $S=\{x_1,x_2...x_k\} \subseteq [m]$ and $i \in [n]$, we have that $f_i(S) = \sum_{j \in [k]}f_i(\{x_j\})$, or in other words the value of a bundle is the sum of the values of the individual items on the bundle. We say that agent $i$ \emph{envies} agent $j$ if $f_i(X_i) < f_i(X_j)$. 

\begin{definition}
    An allocation $[m] = X_1 \sqcup X_2.. \sqcup X_n$ is envy-free up to any good (EFX) if for any $i,j \in [n]$ and any $g \in X_j$, $f_i(X_i) \geq f_i(X_j-\{g\})$. 
\end{definition}

We now define graphical valuations. Take a graph $G$ with vertices labeled $1,2...n$ and edges $1,2...m$, and consider the fair division setting $(F,m,n)$ on the graph. A valuation $F$ is said to be graphical on $G$, if for any $S \subseteq [m]$ and $i \in [n]$, we have $f_i(S) = f_i(S \cap E(i))$, where $E(i)$ denotes the edges incident to vertex $i$. In other words, the only edges a vertex can value are its incident edges. An additive graphical valuation on a graph $G$ is symmetric if for any $uv \in G$, we have that $f_u(uv)=f_v(uv)$. An \emph{orientation} in a graph is assigning a direction to each edge by designating one incident vertex as its "head" and the other as its "tail." We can think of an orientation as "giving" the edge to the agent representing the "head" vertex. We say that an EFX allocation in this setting is an EFX-orientation if for every edge $uv \in G$, we have that either $uv \in X_u$ or $uv \in X_v$. In other words, any edge (good) is allocated to one of its two endpoint vertices (agents). 

\begin{definition}
    A graph $G$ is strongly EFX-orientable if there exists an EFX orientation for any monotone graphical valuation over $G$. 
\end{definition}

We note that a graph being strongly EFX-orientable is a property intrinsic to the graph itself, while an EFX-orientation on a graph only makes sense in the context of both the graph and an assigned valuation. As we will see, determining whether a graph is strongly EFX-orientable or not can be done in polynomial time, in contrast to it being NP-hard to determine whether there is an EFX orientation on a graph with a valuation given. 

\subsection{Connectivity Definitions}

We will now define the relevant graph connectivity definitions needed for both the characterization and the proof. A graph is called $k$-connected, or alternatively referred to as $k$-vertex-connected, if it has at least $k+1$ vertices and is still connected after the deletion of any set of $k-1$ vertices. A complete graph on $k$ vertices is considered $(k-1)$-connected but not $k$-connected. Menger's Theorem states that a graph $G$ is $k$-connected if and only if for every distinct $s,t \in G$, we can find $k$ paths between $s$ and $t$ which overlap only at $s$ and $t$, which are referred to as \emph{internally disjoint paths}. We will use these two equivalent definitions of $k$-connectivity interchangeably. 

A \emph{cut-vertex} of a connected graph $G$ is a vertex $v$ such that $G-v$ is disconnected. Note that a graph with at least 3 vertices having no cut-vertex is equivalent to the graph being 2-connected. For graphs with cut vertices we can construct the \emph{block decomposition} of a graph. Call a graph weakly 2-connected if it is either 2-connected or a single edge. A \emph{block} of a graph is a maximal weakly 2-connected subgraph of $G$. Let $B_1,B_2...B_l$ be the blocks of $G$. These blocks form a tree-like structure which is reflected in the \emph{block cut tree} of the graph. The block cut tree is constructed by taking vertices representing $B_1,B_2...B_l$ and the cut vertices in the graph. All edges are between a cut vertex $v$ and a block $B_i$, and $v$ and $B_i$ are adjacent if and only $v \in B_i$ on the tree. This is connected as $G$ is connected, and if a cycle exists in the tree it would imply that the blocks on the cycle form a 2-connected subgraph together, contradicting the fact that a block is already a maximal 2-connected subgraph. So, the block cut tree is indeed a tree. A vertex in $G$ appears in multiple blocks if and only if it is a cut vertex, and two blocks overlap on at most one of those cut vertices. 

\begin{figure}[h]
\centering
\begin{subfigure}{.5\textwidth}
  \centering
  \includegraphics[width=.8\textwidth]{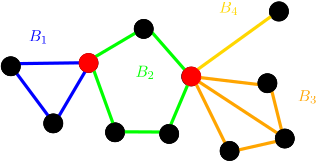}
  \label{}
\end{subfigure}%
\begin{subfigure}{.5\textwidth}
  \centering
  \includegraphics[width=.55\textwidth]{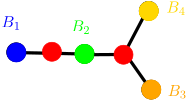}
  \label{}
\end{subfigure}
\caption{}
\label{}
\end{figure}

The following figure is an example of a block decomposition on a graph and the underlying tree structure formed by the block decomposition. Different colors are used to denote different blocks, and note that the red vertices denote the cut vertices, which appear in multiple blocks, and hence disconnect the graph when deleted. Note that a single edge (a $K_2$) can also count as a block as well. 

\subsection{Preliminary Examples}

In this section, we go over two previous results on strongly EFX-orientable graphs to have a better sense of the problem before we dive into the main proof of the characterization. 

\begin{lemma}[\cite{ZM25}]\label{lem:bipartite}
    If $G$ is bipartite, then $G$ is strongly EFX-orientable. 
\end{lemma}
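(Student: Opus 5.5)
Let $(A,B)$ be a bipartition of $G$. The plan is to exploit the fact that every edge has exactly one endpoint in $A$ and one in $B$. We orient the edges so that the vertices of $A$ receive bundles chosen greedily from their own perspective, and every vertex of $B$ ends up non-envious up to any good. Concretely, each $a\in A$ decides which of its incident edges it keeps, and the remaining edges go to the corresponding endpoints in $B$.

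The key observation is about what any vertex $b\in B$ can value in another agent's bundle. The only edges $b$ values are its incident edges, and each such edge $ab$ is given either to $b$ or to $a$. So $b$ can only envy a neighbor $a\in A$, and only because of the single edge $ab$: from $b$'s point of view, $a$'s bundle restricted to $E(b)$ is at most $\{ab\}$. Hence $f_b(X_a\setminus\{g\})\le f_b(\emptyset)$ for every $g\in X_a$ whenever $ab\in X_a$, since removing any good from $X_a$ either removes $ab$ or leaves a set that $b$ values at $f_b(\{ab\})$. I need to handle this carefully.

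The cleaner approach is to give every vertex in $A$ \emph{at most one} edge that some $B$-vertex values. I plan the following construction:
\begin{itemize}
\item[(i)] Each $a\in A$ with at least one incident edge keeps exactly one incident edge $e_a$, chosen to maximize $f_a(\{e\})$ over incident edges $e$.
\item[(ii)] All other edges at $a$ go to their $B$-endpoint.
\end{itemize}
Then every $b\in B$ sees only bundles of size at most one that it values. In $X_a=\{e_a\}$, removing the one good leaves $\emptyset$, so $b$ is EFX-satisfied, whatever $b$ holds, by monotonicity and nonnegativity. Vertices of $B$ never hold edges that other $B$ vertices value, since $B$ is independent.

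The main obstacle is the $A$ side. A vertex $a$ may envy a neighbor $b$, since $b$'s bundle can contain several edges incident to $a$? No: each edge incident to both $a$ and $b$ is just $ab$ (simple graph), so $X_b\cap E(a)\subseteq\{ab\}$. Thus $a$ values $X_b\setminus\{g\}$ at most $f_a(\{ab\})$ if $g\ne ab$, and at $f_a(\emptyset)$ otherwise.

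So we need $f_a(\{e_a\})\ge f_a(\{ab\})$ for all neighbors $b$ whose bundle contains $ab$ together with some other edge. Choosing $e_a$ as the maximizer of single-edge value guarantees exactly this. Vertices in $A$ never value edges owned by other $A$ vertices, so no further envy arises. Finally, I would check the degenerate cases of isolated vertices and empty bundles, where $f(\emptyset)\ge 0$ suffices. Every condition of the EFX definition then holds, which completes the plan.
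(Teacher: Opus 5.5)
Your construction and verification are correct and are essentially the paper's proof: each $a\in A$ keeps one incident edge maximizing $f_a(\{e\})$ and every other edge goes to its $B$-endpoint. Singleton $A$-bundles rule out EFX violations from $B$, while simplicity of $G$ and the maximal choice of $e_a$ handle envy from $A$. You should delete the exploratory second paragraph, since its claim $f_b(X_a\setminus\{g\})\le f_b(\emptyset)$ is false for bundles with several edges and your singleton-bundle construction makes it unnecessary.
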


\begin{proof}
    If the graph is a single vertex this is trivial, so we assume there are at least 2 vertices in $G$. Let $A,B$ be the partite sets of $G$, and take any monotone valuation on the edges. Let each vertex $a \in A$ pick its favorite edge (breaking ties arbitrarily) and orient that edge towards $a$. We note that no two $a,a' \in A$ pick the same edge, since $A$ is an independent set. For the remaining edges, orient them towards their endpoint in $B$. 

    We first claim that no vertex $a \in A$ envies another vertex. Indeed, since $G$ is simple, any vertex $v \in G$ can receive at most one edge that is valuable to $a$. However, if $v$ received the edge $av$, then this means that $a$ picked a different edge $e$, and since $e$ is the favorite edge of $a$, we have that $f_a(\{e\}) \geq f_a(\{av\})$, hence $a$ does not envy $v$. Otherwise, $v$ receives no edges valuable to $a$, and then clearly $a$ does not envy $v$ in this case as well. Two vertices $b,b' \in B$ do not envy each other, because $B$ is an independent set, so it is impossible for $b'$ to receive an edge valuable to $b$. Then, the only envy that exists is between a vertex in $B$ envying a vertex in $A$. However, since all vertices in $A$ received at most one edge, a strict subset of their bundle is the empty set, hence this orientation is EFX, as desired.  
\end{proof}

We now give an example that appears in \cite{Christodoulou23} of a graph and valuation that admits no EFX orientation. Take the $K_4$ below, where the solid edges mean the edge has a valuation of 1 to both endpoints, and dotted edges mean the edge has a valuation of 0 to both endpoints. Valuations are assumed to be additive. 

    \begin{figure}[h]
        \centering
        \includegraphics[width=0.25\textwidth]{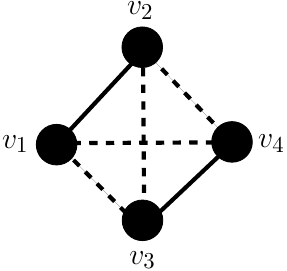}
        \caption{}
        \label{}
    \end{figure}

Note that for any orientation, no matter how we orient the edges of weight 1, we will have two vertices that are envied. Say, without loss of generality, that $v_2$ and $v_4$ receive the edges of weight 1. Note that regardless of how the edges of weight 0 are oriented, $v_1$ will envy $v_2$ and $v_3$ will envy $v_4$. Moreover, the edge $v_2v_4$ needs to go to one of $v_2$ or $v_4$, and say it goes to $v_2$. Then the bundle $v_2$ receives contains $v_2v_1$ and $v_2v_4$, but $v_1$ already envies a strict subset of the bundle $v_2$ received, namely $\{v_2v_1\}$, so this orientation is not EFX. The same logic applies if we orient $v_2v_4$ towards $v_4$ or orient the two edges of weight 1 differently, because regardless we will run into the issue of orienting an edge of weight 0 between two envied vertices, which makes obtaining an EFX orientation impossible.

We will prove that certain classes of graphs are not strongly EFX orientable by finding "bad" structures like the $K_4$. In order to do this, we will need the following proposition.

\begin{proposition}[Monotonicity of Strong EFX Orientability]
Let $G$ be a strongly EFX orientable graph, and let $H$ be a subgraph of $G$. Then $H$ is strongly EFX orientable. 
\end{proposition}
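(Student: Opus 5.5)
We prove the contrapositive in extended form: given any monotone graphical valuation $F_H$ on a subgraph $H$ of $G$, we construct a monotone graphical valuation $F_G$ on $G$. Because $G$ is strongly EFX orientable, $F_G$ has an EFX orientation, and we show that its restriction to $H$ is an EFX orientation for $F_H$. Since $H$ is a subgraph, its edges and vertices are among those of $G$. We handle the vertices of $G$ not in $H$ and the edges of $G$ not in $H$ together by making every extra edge worthless to everyone.

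\textbf{Construction.} For a vertex $i \in V(H)$ and $S \subseteq E(G)$, set $f^G_i(S) = f^H_i(S \cap E_H(i))$. For a vertex $i \notin V(H)$, set $f^G_i \equiv 0$. Each $f^G_i$ is monotone, because intersecting with a fixed set preserves inclusion and $f^H_i$ is monotone. Each $f^G_i$ is graphical on $G$, because $E_H(i) \subseteq E_G(i)$, so $f^G_i(S) = f^G_i(S \cap E_G(i))$.

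\textbf{Restriction.} Take an EFX orientation $(X_v)_{v \in V(G)}$ for $F_G$ and set $Y_v = X_v \cap E(H)$ for $v \in V(H)$. Each edge $uv$ of $H$ was given to $u$ or $v$ in $G$, and both are vertices of $H$, so $(Y_v)$ is an orientation of $H$. For $i \in V(H)$ we have $f^H_i(Y_i) = f^G_i(X_i)$, since $f^G_i$ ignores edges outside $E_H(i)$. Likewise, for any $Z \subseteq X_j$, $f^G_i(Z) = f^H_i(Z \cap E(H))$.

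\textbf{EFX check.} This is the one step that needs care. Let $i,j \in V(H)$ and $g \in Y_j$. Then $g \in X_j$, and
\[ f^H_i(Y_j - \{g\}) = f^G_i(X_j - \{g\}) \le f^G_i(X_i) = f^H_i(Y_i). \]
The first equality holds because $(X_j - \{g\}) \cap E(H) = Y_j - \{g\}$. The inequality is the EFX property of $X$. The subtle point is that removing $g$ must be a legitimate EFX removal in $G$, and it is, because $g \in X_j$. Hence $(Y_v)$ is an EFX orientation of $H$ for $F_H$. As $F_H$ was arbitrary, $H$ is strongly EFX orientable.
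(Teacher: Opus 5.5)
Your proof is correct and follows the same route as the paper: extend the valuation on $H$ to $G$ by making every edge outside $H$ worthless, take an EFX orientation of $G$, and restrict it to $H$. Your chain $f^H_i(Y_j-\{g\}) = f^G_i(X_j-\{g\}) \le f^G_i(X_i) = f^H_i(Y_i)$ just makes precise the paper's remark that values are unchanged while bundles only shrink (one cosmetic point: despite your opening sentence, this is a direct proof, not a contrapositive, which does not affect its validity).
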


\begin{proof}
    Take a valuation on the edges of $H$. Copy that valuation to the edges of $H$ in $G$, and set the edges in $G-H$ to be worthless to both endpoints. We can find an EFX orientation on $G$, and we claim that this orientation is EFX when all edges outside of $H$ are deleted. Indeed, note that for any vertex $v$ and bundle $X_u$, $f_v(X_u)$, that is, vertex $v$'s valuation of the bundle $X_u$, remains the same on $H$, since edges outside of $H$ are worthless. Since all the valuations remain the same, the set of envied vertices in $H$ remains the same. Then, since the bundle each envied vertex receives on $H$ is a subset of its bundle on $G$, this orientation on $H$ is EFX, as desired. 
\end{proof}

\subsection{Related Work}
The idea of fair division over a graph is first introduced in \cite{Christodoulou23}, where it is shown that EFX allocations always exist over simple graphs, and determining whether a simple graph with an assigned valuation has an EFX orientation is NP-hard. \cite{ZM25} show that simple bipartite graphs always admit EFX orientations. On the other hand, it was shown in \cite{hsu2025efxorientationsmultigraphs} and \cite{Afshinmehr2025} that it is NP-hard to determine whether an EFX orientation exists on a bipartite multigraph with an assigned valuation. Furthermore, \cite{Blazej} show that it is NP-hard to determine whether an EFX orientation exists on a simple graph that is one edge-deletion away from being bipartite. \cite{ijcai2025p7} show that the EFX Orientation Problem still remains NP-hard even for graphs of vertex cover number 8, and \cite{kanellopoulos2026efxorientationsparameterizedcomplexity} improve this to show the problem still remains NP-hard for graphs of vertex cover number 4.

There has been much work on finding EFX allocations on combinatorial structures more general than simple graphs. \cite{bhaskar_et_al:LIPIcs.FSTTCS.2025.15} demonstrate that EFX allocations always exist on bipartite multigraphs with cancelable valuations, and \cite{Afshinmehr2025} show that EFX allocations on bipartite multigraphs always exist for general monotone valuations. \cite{bhaskar_et_al:LIPIcs.FSTTCS.2025.15} also prove that any $t$-colorable multigraph with girth $2t-1$ admits an EFX allocation for cancelable valuations. \cite{sgourista} show that multigraphs of girth 6 admit an EFX allocation, which is improved in \cite{afshinmehr2}, who show that all triangle free multigraphs have an EFX allocation. \cite{afshinmehr2026efxallocationsexistmultigraphs} demonstrate that EFX allocations always exist on multigraphs with cancelable valuations. \cite{lianeas2026efxallocationmultihypergraphs} demonstrate that EFX allocations exist on simple hypergraphs with girth at least 4. 

\section{Characterization of Strongly EFX Orientable Graphs}

We have the following characterization of strongly EFX orientable graphs:

\begin{theorem}
    A simple connected graph $G$ is strongly EFX orientable if and only if either of the following is true:
    \begin{enumerate}
        \item $G$ is bipartite, 
        \item The block decomposition of $G$ has exactly one nonbipartite block $B$, and there exists some $v \in B$ with $d_B(v) = 2$ such that $G-v$ is bipartite, where $d_B(v)$ denotes the degree of $v$ in $B$.
    \end{enumerate}
\end{theorem}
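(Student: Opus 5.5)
We prove the two directions separately. Sufficiency comes from a direct orientation. Necessity comes from a short list of forbidden subgraphs, each shown not strongly EFX orientable by a $K_4$-style weighting and then transferred to $G$ through the Monotonicity Proposition.

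\textbf{Sufficiency.} Case 1 is Lemma~\ref{lem:bipartite}. In case 2, fix $v\in B$ with $d_B(v)=2$ and $G-v$ bipartite. Let $u,w$ be the two neighbours of $v$ in $B$. Since $B$ is nonbipartite but $B-v$ is bipartite, $u$ and $w$ lie on the same side of the bipartition of $B-v$. Extend this to a bipartition $(A,C)$ of $G-v$. The edges at $v$ in the other blocks through $v$ are bridges into bipartite parts, so we may flip the sides of each such branch independently and arrange all neighbours of $v$ outside $B$ to lie on a common side. The plan is to run the algorithm from Lemma~\ref{lem:bipartite}: every vertex of the ``picking'' side chooses its favourite edge, and every remaining edge goes to its endpoint on the other side. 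The vertex $v$ is handled specially. Depending on whether $u,w\in A$ or $u,w\in C$, we decide whether $v$ acts as a picker or as a receiver. The conflict that breaks bipartiteness sits only at the two edges $vu$ and $vw$, so we resolve it by giving $v$ its more valuable edge among $\{vu,vw\}$, compared also against its other options. The remaining one of $vu,vw$ goes to its other endpoint, which is a picker that already picked, so that picker then holds at most two edges. We must check that no one envies that picker after removing any single good, and that $v$ does not envy anyone. This needs a case analysis on which of $u,w$ picks which edge, possibly using an envy-cycle-style swap along the odd cycle. The expected key fact is that $v$ values only two edges inside $B$, so it can be satisfied with a single choice.

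\textbf{Necessity.} Suppose $G$ is nonbipartite and strongly EFX orientable. By the cited result of \cite{ZM25}, $\chi(G)\le 3$, though we will mostly use explicit bad subgraphs instead. We find the following obstructions, each taking a weighting with weights $0/1$ like the $K_4$ example and arguing that two envied vertices are forced to be adjacent via a weight-$0$ edge.
\begin{enumerate}
\item[(a)] Two edge-disjoint odd cycles joined by a path, or sharing one vertex. This rules out two nonbipartite blocks.
\item[(b)] An odd cycle plus a chord or ear structure in which every vertex whose removal makes the graph bipartite has degree at least $3$ in $B$. By Menger's theorem, a $2$-connected nonbipartite $B$ failing the condition contains a subdivision of a small family, such as an odd cycle with an extra path creating two odd cycles meeting in an edge-path of length at least $2$, or a theta graph with two odd cycles. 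This family must be handled.
\item[(c)] The case where $B$ does satisfy the condition at some $v$ with $d_B(v)=2$, but $G-v$ is nonbipartite for every such $v$. Here an odd cycle of $B$ together with a pendant tree structure forces a bad configuration.
\end{enumerate}
The main work is showing that subdivisions of these bad graphs remain bad. The approach is to put weight $1$ on a carefully chosen alternating set of path edges and weight $0$ elsewhere, and to argue by a parity argument along odd cycles that some weight-$0$ edge must join two envied vertices.

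\textbf{Main obstacle.} The main obstacle is (b). We need to classify the minimal $2$-connected nonbipartite graphs with no degree-$2$ vertex $v$ such that $B-v$ is bipartite, presumably via an ear decomposition starting from an odd cycle. We then need to give a uniform non-EFX weighting for every subdivision, likely with parity cases on the path lengths. I would first try to reduce every such block to containing either a subdivided $K_4$ or two odd cycles sharing a path of length at least $2$, and construct bad valuations for those two families.
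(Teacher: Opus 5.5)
\textbf{Sufficiency.} Your sketch starts from a false claim. The neighbours $u,w$ of $v$ lie on \emph{opposite} sides of the bipartition of $B-v$; if they were on the same side, placing $v$ on the other side would properly $2$-colour $B$ (a triangle already shows this). So once $v$ is placed, exactly one of $vu,vw$ is monochromatic, and you choose which one. The paper uses exactly this. Take $f_v(xv)\ge f_v(yv)$. Then $G-xv$ is bipartite, because every odd cycle lies in $B$ and uses both edges of $v$ in $B$. Now $v$ and $x$ are both on the picking side, $v$ is given $xv$, and $yv$ goes to a receiver, so no picker ever holds two goods. In your scheme a picker $w$ ends up with its pick $wz$ plus $vw$. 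That can break EFX outright: if $z$ values nothing but $wz$, then $z$ envies the strict subset $\{wz\}$.

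You also never treat the case where $v$'s favourite edge lies in another block through $v$. There the paper orients the branch $G_2$ containing $B-v$ and the other branches $G_1$ separately. It makes $v$ a receiver or a picker in $G_1+v$ according to whether $x$ picks $xv$. As a result, either $v$ is unenvied or it holds a single good, and in both cases $v$ does not envy $y$. Your case (c) is empty: every cycle lies in one block, so if $B$ is the only nonbipartite block and $B-v$ is bipartite, then $G-v$ is bipartite. Your step (a) is fine and matches the paper.

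\textbf{Necessity.} Here the gap is larger: (b) is the whole difficulty, and your proposed reduction aims at the wrong families.
\begin{itemize}
\item Two odd cycles sharing a path of length at least $2$ form a theta graph. An interior vertex $m$ of the shared path has degree $2$, and deleting $m$ leaves an even cycle with two pendant paths. So by sufficiency this graph is strongly EFX orientable and can never be an obstruction.
\item A subdivided $K_4$ is an obstruction only when all four triangles become odd (Lemma~\ref{lem:K4}). With exactly two odd triangles it can satisfy condition 2.
\item Some obstructions contain no $K_4$ subdivision and none of the configurations of Lemmas~\ref{cor:fs1} and~\ref{lem:stpath}. Take $v,x_1,x_2$; join $v,x_1$ and $x_1,x_2$ each by two paths of length $2$, and join $x_2,v$ by an edge and a path of length $3$. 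Every odd cycle passes through all three degree-$4$ vertices, and no other vertex has degree at least $3$. Yet the graph is not strongly EFX orientable.
\end{itemize}

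The paper instead splits on $\tau(B)$. For $\tau\ge 2$ it takes an edge-minimal counterexample and shows that every $2$-separation cuts off a path. It then splits off degree-$2$ vertices to reach a $3$-connected weighted graph. Tutte's cycle space theorem plus a switching argument then yield a $K_4$ subdivision with all triangles odd. That subdivision carries a star-forest obstruction which survives double subdivision. For $\tau=1$ with every transversal vertex of degree at least $3$, it splits $v$ into $s,t$. This gives either two odd cycles meeting only at $v$, or a ring of blocks like the example above. The ring is defeated by a $0/1/2$ weighting and a parity count of happy/unhappy states around it. Your plan contains none of these ingredients or substitutes for them, so necessity is not established.
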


In this section, we will go over the proof of this theorem, and use it to show that deciding whether a graph is strongly EFX orientable or not can be done in polynomial time. The discovery of this proof was mostly done by GPT-6 Astra, and I will try to be as transparent as possible with my thought process and prompting throughout the entire session. 

In 2025, we had proved that every bipartite graph was strongly EFX orientable, moreover, strongly EFX orientable graphs have the property $\chi(G) \leq 3$. There were examples of graphs of chromatic number 3 that were strongly EFX orientable, and examples of graphs of chromatic number 3 that were not strongly EFX orientable. One observation is that the graphs of $\chi(G)=3$ that were strongly EFX orientable were very close to being bipartite graphs, but we were unable to resolve this case and hence shelved the problem. 

Recently I fed the paper into Astra and asked it to try to completely resolve this problem. It did not make much progress at all during its first attempt but handled the special case where $G$ consisted of two vertices $s,t$ and all other vertices and edges formed a collection of internally disjoint paths. This setup reminded me of connectivity, and there was a lemma in our previous work which also had to do with certain connectivity properties affecting whether a graph was EFX orientable or not. 

\begin{lemma}[\cite{ZM25}]\label{lem:suf-tight}
    Let $G$ be a bipartite graph with $|V(G)| \geq 4$ that remains connected after the deletion of any vertex. Suppose that an edge $e$ is added, which connects two vertices of the same partite set. Then $G+e$ is not strongly EFX-orientable.  
\end{lemma}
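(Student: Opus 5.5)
The statement to prove is Lemma~\ref{lem:suf-tight}: a 2-connected bipartite $G$ plus an edge $e$ inside one partite set is not strongly EFX-orientable. My plan is to find, inside $G+e$, a subgraph that generalizes the $K_4$ obstruction. By the monotonicity proposition it then suffices to give one bad valuation on that subgraph.

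\textbf{Finding the subgraph.} Let $e=st$ with $s,t$ in the same partite set $A$. Since $G$ is 2-connected with $|V(G)|\ge 4$, Menger's theorem gives two internally disjoint $s$--$t$ paths $P_1,P_2$ in $G$. Both have even length at least $2$, because $s,t$ lie on the same side of the bipartition. Together with $e$ they form a theta graph $\Theta$: three internally disjoint $s$--$t$ paths, of lengths $1$, $2a$ and $2b$ with $a,b\ge1$. The key special case is $a=b=1$. There $\Theta$ is $K_4$ minus an edge, and I expect it already to be bad. The $K_4$ example only used the solid edges plus the zero edge between the two envied vertices.

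\textbf{The bad valuation.} I will use additive, symmetric valuations. Put weight $1$ on the first edge of each long path at $s$ and at $t$: these are $su_1$ and $tw_1$ on $P_1$, and $sx_1$ and $ty_1$ on $P_2$. Put weight $0$ on $e$. For the interior edges of $P_1$ and $P_2$, the aim is an alternating weight pattern in which every interior vertex strictly prefers one particular incident edge. This pattern should force the heavy edges adjacent to $s$ and $t$ to be oriented toward $s$ and $t$, or else create a chain of envy whose contradiction propagates along the path. Ideally every orientation then makes both $s$ and $t$ envied by a neighbour that lost its only valuable edge. Whichever endpoint receives the zero edge $st$ holds a bundle with two edges, one of which some neighbour already envies strictly. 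This violates EFX, exactly as in the $K_4$ argument.

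\textbf{Main obstacle.} The hard step is the case analysis. I must show that no orientation of $\Theta$ escapes, for instance by orienting a heavy edge away from $s$ and pushing the envy into the interior of a path. For length-$2$ paths it is a finite check. For longer even paths I would try an inductive suppression argument: an interior path vertex of degree $2$ with two edges of weight $1$ either keeps one edge and is content, or is envied. I would tune the weights, using $1$ versus a tiny $\varepsilon$ or non-symmetric values, so that any non-forced orientation creates an envied vertex holding two edges. If a single uniform pattern fails for long paths, the fallback is to contract the problem. Choose the valuation on $P_1$ so that all its interior edges are worthless except the two end edges. This effectively reduces each long path to the length-$2$ case, and the evenness of the path lengths is what makes the reduction parity-consistent.
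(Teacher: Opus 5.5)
Your reduction is sound. Since $G$ is 2-connected and $s,t$ lie on the same side, Menger gives two internally disjoint $s$--$t$ paths of even length $\ge 2$. So $\Theta$ is exactly two odd cycles sharing only the edge $e$ (the first structure in Lemma~\ref{cor:fs1}), and monotonicity reduces everything to $\Theta$. The paper only quotes the statement from \cite{ZM25}, so there is no in-paper proof to match.

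\textbf{The gap.} The whole content of the lemma is a bad valuation on $\Theta$, and you have not supplied one. The one concrete candidate you give fails already at $a=b=1$. Write $P_1=sut$ and $P_2=sxt$, with weight $1$ on $su,ut,sx,xt$ and $0$ on $st$. Orient $su\to u$, $ut\to t$, $xt\to x$, $sx\to s$, and $st$ either way. Every vertex then holds exactly one weight-$1$ edge, and any other bundle contains at most one edge it values, so this orientation is even envy-free.

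Tuning within $\{0,1\}$ cannot help either. In $K_4-e$, a forest with one nontrivial tree is handled by choosing a leaf as witness. The only forests with two nontrivial trees are the matchings $\{su,xt\}$ and $\{sx,ut\}$, where the witnesses $s,t$ give the independent set $\{u,x\}$. By the forest/witness lemma, every $0$--$1$ symmetric additive valuation on $K_4-e$ therefore has an EFX orientation. So the $K_4$ mechanism (``two envied vertices joined by a zero edge'') is not available here, and choosing the weights is the missing idea, not a finite check.

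\textbf{A valuation that works.} Use three weight levels placed crosswise, with $e$ in the middle: $w(st)=1$, $w(su)=w(tx)=2$, $w(ut)=w(sx)=0$.
\begin{itemize}
\item If $su\to s$: $u$ has value $0$ and envies $s$, forcing $X_s=\{su\}$, hence $st\to t$ and $sx\to x$. Then $tx\to t$ violates EFX for $x$, since it values $X_t\setminus\{st\}\ni tx$ at $2>0$. And $tx\to x$ violates EFX for $t$, since it values $X_x\setminus\{sx\}=\{tx\}$ at $2>1$.
\item If $su\to u$: $s$ has value $\le 1$ and envies $u$, forcing $ut\to t$. Then $tx\to t$ fails for $x$, so $tx\to x$. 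Now $t$ envies $x$, forcing $sx\to s$. Finally, whichever of $s,t$ receives $st$ also holds a zero edge ($sx$ or $ut$). The other vertex has value $0$ and values $\{st\}$ at $1$, so EFX is violated.
\end{itemize}
This is exactly the weighting $H''$ from the paper's $\tau(H)=1$ argument with $k=2$: one single-edge block $st$ and one block of two even paths.

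\textbf{Longer paths.} Replace the zero edges $ut$ and $sx$ by odd paths, using Double Subdivision of Zero Edges, and value every edge of $G+e$ outside $\Theta$ at $0$. Your fallback of zeroing the interiors while keeping both end edges heavy does not achieve this. At $a=b=1$ there is no zero edge on $P_1,P_2$ to subdivide. For longer paths, the two heavy end edges sit at different vertices, so the instance no longer reduces to the length-$2$ case.
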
 

This lemma indicated that perhaps, high connectivity was an inherent obstruction for nonbipartite graphs to be strongly EFX orientable. I then prompted GPT to first solve this problem but for graphs of high connectivity. It then mostly resolved the entire problem, showing that for any 3-connected graph, being strongly EFX orientable is equivalent to being bipartite. We will prove this using the following two results:

\begin{lemma}[\cite{ZM25}]\label{cor:fs1}
    Odd cycles that share exactly one edge, odd cycles that share exactly one vertex, and disjoint odd cycles connected by a path are not strongly EFX-orientable.
\end{lemma}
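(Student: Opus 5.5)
The plan is to exhibit, for each of the three graph families, one explicit additive valuation that admits no EFX orientation. I would use the same mechanism as in the $K_4$ example. A vertex $u$ that receives nothing it values, while a neighbor $w$ holds the edge $uw$ that $u$ values positively, envies $w$. Such a $w$ must then hold \emph{no other edge at all}, since otherwise $\{uw\}$ would be a strict subset of $X_w$ that $u$ envies. Conversely, a vertex that holds a positive edge of its own is harder to make envious. So the proof becomes a counting argument about which vertices can be ``satisfied''. The Monotonicity Proposition is not needed, because each family is already a minimal configuration; it will be used later to lift these obstructions to larger graphs.

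\textbf{Basic counting on one odd cycle.} Take an odd cycle $C=x_0x_1\cdots x_{2k}$ in which every cycle edge has weight $1$ to both endpoints, and every other edge is worth $0$ to both endpoints. An orientation gives each cycle vertex $0$, $1$ or $2$ cycle edges, and since the number of cycle edges equals the number of cycle vertices, the total is $2k+1$. Suppose some cycle vertex $u$ receives no cycle edge. Then, by the mechanism above, both of its cycle neighbors that hold its two edges must hold exactly one edge each, including zero-valued edges. By parity, the only orientations avoiding any vertex with $0$ cycle edges are the two cyclic orientations, where every cycle vertex holds exactly one cycle edge. I will record this as a claim: in an EFX orientation, either $C$ is cyclically oriented, or some vertex of $C$ holds no cycle edge and its holders are ``frozen'' with bundles of size one.

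\textbf{The three cases.} I would tune the weights near the shared structure so that neither alternative in the claim can hold for both cycles at once.
\begin{enumerate}
\item \emph{Two odd cycles sharing one edge $e=xy$.} The edge $e$ lies on both cycles. Cyclic orientations of both cycles would force $x$ (or $y$) to hold $e$ together with another cycle edge. I would weight $e$ so that this is not EFX for the vertex that loses, and then check that the frozen alternative also fails, since a frozen holder would have to be $x$ or $y$, which carries extra edges.
\item \emph{Two odd cycles sharing one vertex $x$.} If both cycles are cyclic, then $x$ holds two edges, one from each cycle. The neighbor on the other cycle whose edge $x$ took has received nothing that values it... I would set the four edges at $x$ to have weights $2$ on one cycle versus $1$, so that this neighbor envies a singleton inside $X_x$. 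Then I would rule out the non-cyclic alternatives by the frozen-vertex argument.
\item \emph{Two disjoint odd cycles joined by a path $P$.} Give the path edges weight $1$ as well. The argument then propagates a ``deficit'' along $P$: each internal path vertex that is not frozen forces the next path edge in a determined direction. The two ends therefore cannot both be absorbed cyclically, and the argument reduces to the shared-vertex case at one endpoint.
\end{enumerate}

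\textbf{Main obstacle.} I expect the path case to be hardest. It requires an induction along $P$ showing that some vertex ends up with an extra positive edge, while its envious neighbor holds nothing. The delicate part is choosing weights, e.g. alternating $1$ and $2$ along $P$, so that the envy relation is strict at every step rather than tied. I would first try all-ones weights and check small cases, such as two triangles joined by an edge, before committing to the weighting.
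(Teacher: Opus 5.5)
There is a genuine gap. You never fix a valuation, and each concrete weighting you propose admits an EFX orientation, so the step that carries the lemma --- exhibiting an obstruction --- is missing. The paper imports this lemma from \cite{ZM25} without proof, so your argument has to stand on its own.

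\textbf{Shared edge.} The premise is wrong. If both cycles are oriented cyclically with $e=xy$ held by $y$, then $y$ holds only $e$, and $x$ holds its two \emph{non}-$e$ cycle edges. With weight $1$ on all other edges, every neighbor of $x$ still holds a weight-$1$ edge, so this orientation is EFX whatever $e$ is worth. In the diamond (the $4$-cycle $xayb$ plus $xy$), the orientation $X_x=\{xa,xb\}$, $X_a=\{ay\}$, $X_b=\{by\}$, $X_y=\{xy\}$ is EFX.

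\textbf{Shared vertex.} A cyclic orientation leaves \emph{every} vertex with one cycle edge, so the neighbor whose edge $x$ took is not empty-handed. Take triangles $xab$, $xcd$ with $xa=xb=2$ and all other edges $1$. Then $X_x=\{xa\}$, $X_a=\{ab\}$, $X_b=\{xb\}$, $X_c=\{xc,cd\}$, $X_d=\{xd\}$ is EFX.

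\textbf{Path.} With all weights $1$, two triangles joined by an edge are orientable: orient both triangles cyclically and give the bridge to either end.

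Your one-cycle dichotomy is correct but cannot drive a contradiction, since uniform weights are orientable. You also give no argument that works for all cycle and path lengths. Because your cycles carry no zero-valued edges, the Double Subdivision of Zero Edges lemma is unavailable as a way to reach them.

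What is missing is a non-uniform valuation (e.g.\ with values in $\{0,1,2\}$) that sets up a forcing conflict, together with a reduction to small base cases.

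\textbf{Shared edge.} This family is exactly Lemma~\ref{lem:suf-tight}. Deleting $e$ leaves the even cycle formed by the two even $x$--$y$ paths. That cycle is $2$-connected and bipartite with at least $4$ vertices, and $x,y$ lie on the same side.

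\textbf{Shared vertex.} For triangles $xab$, $xcd$, the symmetric valuation $xa=0$, $xb=1$, $ab=2$, $xc=xd=2$, $cd=0$ works:
\begin{enumerate}
\item $x$ must keep one of $xa,xb$. Otherwise, whichever of $a,b$ also gets $ab$ is envied by the other via $\{ab\}$.
\item $x$ then cannot also take $xc$ or $xd$. That endpoint would be left with value $0$ and would envy a singleton inside a two-edge bundle.
\item So $c,d$ keep $xc,xd$, and whichever also gets $cd$ is envied by $x$, because $f_x(\{xc\})=2>1\ge f_x(X_x)$.
\end{enumerate}
Replacing the zero edges $xa$ and $cd$ by odd paths, via the Double Subdivision of Zero Edges lemma applied repeatedly as the paper does for $H''$, then reaches every pair of odd cycles sharing one vertex.

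\textbf{Path.} This family needs an analogous explicit base construction, covering both parities of the connecting path. Your sketch does not provide one.
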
  

\begin{theorem}[Tutte's Cycle Space Theorem \cite{Tutte}]
    The set of induced non-separating cycles of a 3–connected graph $G$ generates the cycle space of $G$.
\end{theorem}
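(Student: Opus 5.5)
The plan is to work over $\mathbb{F}_2$, identifying each cycle with its edge set, and to argue by contradiction. Let $\mathcal{C}$ be the subspace spanned by the induced non-separating cycles. Since the cycle space is spanned by all cycles, it suffices to show that every cycle $C$ lies in $\mathcal{C}$. Among cycles not in $\mathcal{C}$, I would pick one in a carefully ordered way and then decompose it as a sum $C = C_1' + C_2'$ of two cycles. Each of these should be "better" with respect to the ordering, so both lie in $\mathcal{C}$, which is a contradiction.

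\textbf{Step 1 (chords).} If $C$ has a chord $xy$, then $xy$ splits $C$ into two shorter cycles $C_1', C_2'$. Both contain $xy$, so $C = C_1' + C_2'$ in the cycle space. By induction on length, both shorter cycles lie in $\mathcal{C}$, and hence so does $C$. So a minimal counterexample is induced.

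\textbf{Step 2 (separating induced cycles).} Now fix an induced cycle $C \notin \mathcal{C}$ that is separating. Choose the pair $(C, D)$, where $D$ is a component of $G - C$, so that $|V(D)|$ is as large as possible among all induced counterexamples. Let $D'$ be another component of $G-C$. By $3$-connectivity, $D'$ has at least three neighbours on $C$, since otherwise at most two vertices of $C$ would separate $D'$ from the rest. Pick $x, y \in N(D') \cap C$ and an $x$--$y$ path $P$ whose interior lies in $D'$. The path $P$ splits $C$ into arcs $C_1, C_2$, giving cycles $C_1' = C_1 \cup P$ and $C_2' = C_2 \cup P$ with $C = C_1' + C_2'$. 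Thus one of them, say $C_1'$, is not in $\mathcal{C}$. Applying Step 1 to $C_1'$ (its chords can only join $P$ to $C_1$, or lie inside $D'$), we pass to an induced counterexample $C''$ whose vertex set lies inside $V(C_1') \subseteq V(C) \cup V(D')$. Because $D$ is disjoint from $C''$, it lies in a single component of $G - C''$. If in addition $D$ has a neighbour in the interior of the arc $C_2$, that component also contains $\mathring{C_2}$ and is strictly larger than $D$. This contradicts the maximality of $|V(D)|$.

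\textbf{Step 3 (the main obstacle).} The difficulty is that we do not control which of $C_1', C_2'$ fails to lie in $\mathcal{C}$. We therefore need to choose $x, y$ so that $D$ attaches to the interiors of \emph{both} arcs, or else argue around this. I would first try to use that $D$ also has at least three attachment points on $C$, again by $3$-connectivity. Take three attachments $a_1, a_2, a_3$ of $D'$ and three attachments of $D$, and do a case analysis on their cyclic interleaving along $C$. Either some pair $\{x,y\} \subseteq N(D')$ separates two attachments of $D$ on $C$, in which case Step 2 applies directly. Or all attachments of $D'$ lie in one segment between consecutive attachments of $D$. In that case I would swap roles: take $P$ through $D$ instead, and compare with the maximality of $D'$ under a lexicographic ordering (first component size, then cycle length). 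This last case is where the extremal choice has to be set up carefully, and I expect it to be the real work. Finally, a cycle that is both induced and non-separating is a generator by definition, which closes the induction.
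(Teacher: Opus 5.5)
The paper does not prove this theorem; it only cites Tutte. Your argument therefore has to stand on its own, and it does not yet. Steps 1 and 2 are sound. One small point: Step 1 should be phrased as a descent (split along a chord and keep a summand outside $\mathcal{C}$, whose vertex set stays inside the original one) rather than as induction on length, because your extremal choice is by $|V(D)|$.

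The gap is Step 3, which is where the proof actually lives. Your case split there is not exhaustive. Suppose $N(D)\cap V(C)=N(D')\cap V(C)=\{a_1,a_2,a_3\}$. Then for every pair of attachments $x,y$ of $D'$, one open arc contains no attachment of $D$. Yet the attachments of $D'$ do not lie in a single segment between consecutive attachments of $D$. In this configuration no single path $P$ through $D'$ works. If the bad summand is the one using the arc through the third $a_l$, and it is already induced, it contains all attachments of $D$, so the component containing $D$ does not grow. The ``swap roles'' remedy also fails against your extremal choice. You maximized $|V(D)|$, and a path through $D$ destroys $D$. The most you could conclude is that some induced counterexample has a component strictly containing $D'$, which contradicts nothing since $|V(D')|\le|V(D)|$. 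A length tiebreak does not help, and non-interleaving is symmetric, so the swapped configuration fails for the same reason.

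Two things are missing: you should not fix $D'$ in advance, and you need a three-term decomposition for the leftover case. Write $N(D)\cap V(C)=\{b_1,\dots,b_k\}$ in cyclic order with $k\ge 3$, and let $T_j$ be the closed arc from $b_j$ to $b_{j+1}$.

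First, not every other component can attach inside a single segment. Suppose every component $D''\ne D$ of $G-V(C)$ attached inside a single $T_j$. Since $C$ separates, some $T_j$ carries such a component. Let $W$ be $\mathring{T_j}$ together with all components attached within $T_j$. Because $C$ is induced and $D$ has no attachment in $\mathring{T_j}$, the neighbourhood of $W$ lies in $\{b_j,b_{j+1}\}$, and $D$ lies outside $W$. That is a $2$-separator, contradicting $3$-connectivity.

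So some $D''\neq D$ has attachments not confined to one $T_j$. If one of its attachments lies in some $\mathring{T_j}$, pair it with an attachment outside $T_j$. If all its attachments are among the $b_i$ and $k\ge4$, two of them are non-consecutive in the cyclic order of the $b_i$. Either way you get $x,y$ with some $b_i$ in each open arc, and your Step 2 applies verbatim with $P$ through $D''$.

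The remaining case is $k=3$ with $N(D'')\cap V(C)=\{b_1,b_2,b_3\}$, and it needs a new idea. Take neighbours $v_i\in D''$ of the $b_i$ and a minimal subtree of $D''$ containing them, then add the edges $v_ib_i$. This gives a subdivided claw with legs $R_1,R_2,R_3$. Then $C=Z_{12}+Z_{23}+Z_{31}$, where $Z_{ij}$ is $R_i\cup R_j$ plus the arc from $b_i$ to $b_j$ avoiding the third vertex $b_l$. Whichever $Z_{ij}$ lies outside $\mathcal{C}$ descends to an induced counterexample avoiding both $D$ and $b_l$. The component containing $D$ then also contains $b_l$, contradicting maximality.
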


We will define these terms and explain this theorem. Each subgraph $H \subseteq G$ can be encoded as a 0-1 vector with entries in $\mathbb{Z}_2$ of length $|E(G)|$, where element $i$ indicates whether edge $e_i$ is in $H$ or not. The cycle space of $G$ is the vector subspace generated by these indicator vectors for all the cycles. Tutte's Cycle Space Theorem states that one can generate the same subspace by only using the indicator vectors for the induced cycles that do not disconnect the graph when deleted. 

\begin{lemma}
    A 3-connected graph is strongly EFX orientable if and only if it is bipartite. 
\end{lemma}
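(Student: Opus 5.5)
The direction ``bipartite $\Rightarrow$ strongly EFX orientable'' is exactly Lemma~\ref{lem:bipartite}, so the work is the converse. I will show that every nonbipartite 3-connected graph $G$ contains, as a subgraph, one of the forbidden configurations of Lemma~\ref{cor:fs1}: two odd cycles sharing exactly one edge, two odd cycles sharing exactly one vertex, or two disjoint odd cycles joined by a path. By the monotonicity proposition, $G$ is then not strongly EFX orientable.

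\textbf{Step 1: get a non-separating induced odd cycle.} Since $G$ is nonbipartite, its cycle space contains the indicator vector of an odd cycle. Parity of length is a linear functional on the cycle space over $\mathbb{Z}_2$, because it is the sum of the coordinates. By Tutte's Cycle Space Theorem, the odd cycle is a $\mathbb{Z}_2$-sum of induced non-separating cycles. Hence at least one induced non-separating cycle $C$ has odd length.

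\textbf{Step 2: build a second odd cycle.} Since $C$ is non-separating, $H = G - V(C)$ is connected. Also $H$ is nonempty: otherwise $G = C$, which is not 3-connected. Because $C$ is induced and $G$ has minimum degree at least 3, every vertex of $C$ has a neighbour in $H$. Pick distinct vertices $x, y$ on $C$ that are adjacent on $C$, and neighbours $x', y' \in H$.
\begin{enumerate}
\item If we can choose $x' \neq y'$, take an $x'$--$y'$ path $P$ in $H$. The path $Q = x\,x'\,P\,y'\,y$ has its ends on $C$ and is internally disjoint from $C$. The two arcs of $C$ between $x$ and $y$ are the edge $xy$ and the long arc, of lengths $1$ and $|C|-1$, which have different parities. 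So exactly one of $Q \cup xy$ and $Q \cup (\text{long arc})$ is odd.
\begin{enumerate}
\item If $Q \cup xy$ is odd, then it and $C$ are odd cycles sharing exactly the edge $xy$.
\item If $Q \cup (\text{long arc})$ is odd, then it and $C$ share a path of length $|C|-1$. I then use a different pair: let $x, y$ be at distance $2$ on $C$, or more generally use a chord-like ear through $H$ whose endpoints split $C$ into arcs of lengths $\ell$ and $|C|-\ell$. The odd cycle formed by $Q$ and the short arc shares that short arc with $C$.
\end{enumerate}
\item If $x' = y'$ is forced, then $C$ plus $x'$ gives a triangle $x x' y$ sharing the edge $xy$ with $C$. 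This is again the ``odd cycles sharing exactly one edge'' configuration.
\end{enumerate}

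\textbf{Main obstacle.} The difficulty is in case~1(b): ensuring the two odd cycles share exactly one edge or one vertex, rather than a longer path. Lemma~\ref{cor:fs1} only covers those exact patterns, so odd cycles sharing a longer path do not directly apply.

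My first remedy is to pick the ear $Q$, with both ends on $C$, to minimise the shorter arc length $\ell$ among all ears from $H$ to $C$. I then argue using 3-connectivity together with the parity dichotomy:
\begin{itemize}
\item If the ear with $\ell = 1$ always yields an even cycle with $xy$, then the paths through $H$ from neighbours of consecutive $C$-vertices have lengths of fixed parity.
\item Propagating this parity around the odd cycle $C$ gives a contradiction, because an odd cycle cannot be consistently 2-coloured.
\item Hence some consecutive pair $x, y$ yields an odd $Q \cup xy$.
\end{itemize}
Concretely, take a vertex $h \in H$ and, for each $c \in V(C)$, the parity of the distance from $c$ to $h$ through $H$. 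If adjacent $x, y$ received different parities from paths that could be made disjoint, we would get an even ear, and hence an odd cycle with the edge $xy$. Since $C$ is odd, some adjacent pair must have equal parity, which yields an odd cycle $Q \cup xy$ sharing exactly the edge $xy$ with $C$.

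The delicate point is making $x' \neq y'$ and the two paths combine into a single $x'$--$y'$ path in $H$. This is routine, because $H$ is connected and I can take any $x'$--$y'$ path, whose parity is what matters. If parity depends on the choice of path, then $H$ itself contains an odd cycle. In that case I instead use that odd cycle $D$ in $H$, which is disjoint from $C$, together with a path joining $C$ to $D$. This gives the ``disjoint odd cycles connected by a path'' configuration. I would handle this subcase first, reducing to $H$ bipartite, where path parities between fixed vertices are well defined and the propagation argument goes through cleanly.
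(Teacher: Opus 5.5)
Your final argument is correct and is essentially the paper's proof. It uses Tutte's theorem and the parity functional to get an induced non-separating odd cycle $C$, and handles the case $G-V(C)$ nonbipartite via two disjoint odd cycles joined by a path. It then 2-colours the bipartite $G-V(C)$ and uses the oddness of $C$ to find adjacent $x,y$ whose chosen outside neighbours have the same colour, which gives an odd cycle meeting $C$ in exactly the edge $xy$. The case~1(b) detour becomes unnecessary once you reduce to $G-V(C)$ bipartite first. Also correct the slip that adjacent vertices with \emph{different} parities give an odd ear, not an even one; it is the equal-parity pair that yields the odd cycle, as your next sentence correctly says.
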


\begin{proof}
    The "if" direction is obvious (see Lemma \ref{lem:bipartite}), and hence we focus on the "only if" direction. Take a 3-connected graph $G$ that is not bipartite, and take its set of induced non-separating cycles. We claim that one of the cycles there must be odd. Indeed, assume for contradiction that all such cycles are even. Then by Tutte's Cycle Space Theorem, their indicator vectors generate the entire cycle space of $G$, which we know contains an odd cycle. However, since all indicator vectors of the induced nonseparating cycles have an even number of 1s, this means that the vector space generated by them only has vectors of even parity. However, the indicator vector corresponding to the odd cycle has an odd parity yet is in the cycle space, a contradiction.  

    Then, we can find an odd cycle $C$ such that $C$ is induced and $G-V(C)$ is still connected. Assume that $G-V(C)$ is not bipartite. Then this implies the existence of two disjoint odd cycles in $G$, which means that we have two disjoint odd cycles connected by a path, a forbidden substructure by Lemma \ref{cor:fs1}. Otherwise, we have that $G-V(C)$ is bipartite. Note that in $G$, every vertex in $C$ has a neighbor in $G-V(C)$, as $G$ is 3-connected so all vertices in $G$ have a degree of at least 3. We add $C$ back to $G-V(C)$, and for each vertex $v$ in $C$, we pick a single edge $e = uv$ where $e \in G$ and $u \in G-V(C)$, and add $e$ back to the graph. We properly 2-color all of $G-V(C)$, and for a vertex in $C$, we color it the opposite color of its neighbor in $G-V(C)$. We can find $v_1,v_2 \in C$ such that $v_1,v_2$ are adjacent and have the same color. Let $u_1$ be the neighbor of $v_1$ in $G-V(C)$, and $u_2$ be the neighbor of $v_2$ in $G-V(C)$. Since $u_1,u_2$ have the same color, we can find a path $P'$ between $u_1,u_2$ on $G-V(C)$ of even length. Then $P=v_1u_1 + P' + u_2v_2$ is a path of even length from $v_1$ to $v_2$ not using any edges in $C$. Then $P+v_1v_2$ is an odd cycle, and it shares exactly one edge, $v_1v_2$, with $C$. This is a forbidden substructure by Lemma \ref{cor:fs1} as well, so this graph is not strongly EFX orientable, as desired. 
\end{proof}

While resolving the case for 3-connected graphs, Astra came up with the following conjecture. 

\begin{conjecture}
    A 2-connected nonbipartite graph is strongly EFX orientable if and only if it has a degree-2 vertex $v$ such that $G-v$ is bipartite. 
\end{conjecture}

The "if" direction of this conjecture was already known. I suggested to Astra to prove this conjecture by splitting it into cases based on the odd cycle transversal $\tau(G)$ of the graph, where $\tau(G)$ indicates the minimum number of vertices that need to be deleted to make $G$ bipartite. If a nonbipartite graph $G$ does not have a degree 2 vertex $v$ such that $G-v$ was bipartite, either $\tau(G) \geq 2$, or $\tau(G)=1$ but any vertex contained in all odd cycles had a degree of at least 3. To handle the case with $\tau \geq 2$, Astra came up with a collection of additional forbidden substructures, using the following lemma as a base. 

\begin{lemma}[\cite{ZM25}]
    A graph $G$ is EFX-orientable for every 0-1 symmetric additive valuation on the edges if and only if for every forest $F \subseteq G$ with trees $T_1,T_2...T_k$, there exists a collection of vertices, $v_1 \in T_1, v_2\in T_2...v_k\in T_k$ (we will refer to such vertices as \emph{witnesses}) such that the graph induced by $\bigcup_{i \in [k]}N_F(v_i)$ in $G$ is an independent set. 
\end{lemma}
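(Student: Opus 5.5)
The plan is to first translate EFX for 0-1 symmetric additive valuations into a purely combinatorial condition on the orientation, and then prove both directions directly. Fix such a valuation and let $H \subseteq G$ be the subgraph of edges of weight $1$. The central observation concerns a vertex $v$ and any other vertex $u$. Since $G$ is simple and $v$ only values edges incident to $v$, the bundle $X_u$ contains at most one edge valuable to $v$, namely $uv$. Hence $v$ envies $u$ exactly when three things hold: $uv \in E(H)$, $uv$ is oriented to $u$, and $v$ receives no weight-$1$ edge. Call such a $v$ \emph{poor}. The orientation is then EFX iff every envied vertex $u$ has $|X_u| = 1$. Indeed, if $|X_u|\ge 2$, deleting an edge of $X_u$ other than $uv$ leaves a set that $v$ values at $1 > 0$. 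Conversely, if $|X_u| = 1$, every strict subset of $X_u$ is empty. So I reduce everything to the condition: \emph{every envied vertex receives exactly one edge}.

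\textbf{Necessity.} Suppose some forest $F$ with trees $T_1,\dots,T_k$ admits no valid choice of witnesses. Put weight $1$ on $E(F)$ and weight $0$ elsewhere, and take any orientation. Each $T_i$ has one fewer edge than vertices, and its edges can only go to vertices of $T_i$. So some $v_i \in T_i$ receives no weight-$1$ edge, i.e. it is poor.
\begin{itemize}
\item Every $F$-neighbour $u$ of $v_i$ received $uv_i$, so $u$ is envied by $v_i$ and must have $X_u = \{uv_i\}$.
\item By assumption these $v_i$ are not valid witnesses, so there is an edge $xy \in E(G)$ with $x,y \in \bigcup_i N_F(v_i)$.
\item The edge $xy$ is not of the form $xv_i$, because no witness lies in $\bigcup_i N_F(v_i)$: distinct witnesses lie in distinct trees, and a witness is not its own $F$-neighbour.
\item Therefore $xy$ lands in a bundle $X_x$ or $X_y$ that also contains that vertex's edge to its witness. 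This gives a bundle of size $2$ at an envied vertex, so the orientation is not EFX.
\end{itemize}

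\textbf{Sufficiency.} Given a weight-$1$ subgraph $H$, split its components into tree components $T_1,\dots,T_k$ and components containing a cycle. Let $F = T_1\cup\dots\cup T_k$ and apply the hypothesis to obtain witnesses $v_i$.
\begin{itemize}
\item In each $T_i$, root at $v_i$ and orient every edge toward its endpoint farther from $v_i$ (the child). Each non-root vertex then receives exactly one weight-$1$ edge, namely its parent edge.
\item In each cyclic component, choose a spanning unicyclic subgraph. Orient its cycle cyclically and its tree parts away from the cycle, so every vertex receives exactly one edge. Orient the remaining weight-$1$ edges of that component arbitrarily.
\end{itemize}
Now the only poor vertices are the $v_i$, so by the observation the envied vertices are exactly $S = \bigcup_i N_F(v_i)$, and each currently holds only its parent edge. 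Every remaining edge (all weight-$0$ edges, and any leftover weight-$1$ edges not yet oriented) has at most one endpoint in $S$, because $S$ is independent in $G$. Orient each such edge to its endpoint outside $S$. Edges inside $H$ are already handled: an envied $u\in S$ has all its $H$-edges in its own tree, and the edges to its children go to the children. Thus every envied vertex keeps a bundle of size $1$, and the orientation is EFX.

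The main obstacle is the characterization step: making sure that envy under a 0-1 graphical valuation really is this local, and that the EFX failure is exactly "an envied vertex holds a second edge" rather than something subtler. Simplicity of $G$, giving at most one relevant edge per bundle, is the key point there. The second delicate point is handling weight-$1$ subgraphs that are not forests. The unicyclic-orientation trick should dispose of this, since those components create no poor vertices and hence no envy.
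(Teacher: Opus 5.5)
Your proof is correct. For necessity you use the same idea the paper gives to motivate the lemma. Each tree $T_i$ carries only $|V(T_i)|-1$ valuable edges, so some vertex of $T_i$ receives none and envies each of its $F$-neighbours. An edge of $G$ between two envied vertices then cannot be oriented without giving an envied vertex a second edge. Your preliminary reduction makes that sketch rigorous: under a 0-1 symmetric graphical valuation on a simple graph, an orientation is EFX iff every envied vertex owns exactly one edge.

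The paper does not prove sufficiency; it quotes the lemma from \cite{ZM25}. Your sufficiency argument fills that in soundly.
\begin{itemize}
\item Components of the weight-1 subgraph that contain a cycle can be oriented through a spanning unicyclic subgraph. Then every vertex there receives a valuable edge, so these components produce no envy at all.
\item Rooting each tree component at its witness makes the envied set exactly $S=\bigcup_i N_F(v_i)$, and each vertex of $S$ holds only its parent edge.
\item Independence of $S$ in $G$ lets every weight-0 edge be oriented away from $S$.
\end{itemize}

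Two cosmetic points. First, by the time you orient the weight-0 edges, no weight-1 edges remain unoriented, so the phrase about ``leftover weight-1 edges'' can be dropped. Second, the claim that the $v_i$ are the only vertices without a valuable edge needs one of two fixes: treat vertices with no incident weight-1 edge as trivial tree components, or note that such vertices value nothing and so envy no one.
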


    \begin{figure}[h]
        \centering
        \includegraphics[width=0.4\textwidth]{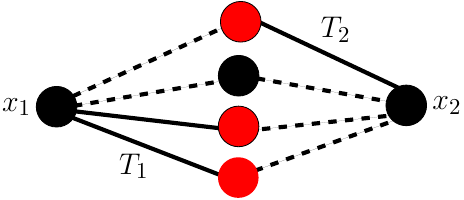}
        \caption{}
        \label{EFXfig}
    \end{figure}

As this lemma isn't straightforward to digest, we will first explain it with an example. Suppose, in the graph in Figure \ref{EFXfig}, an adversary gave us $T_1$ and $T_2$. We would then need to pick a witness vertex for every tree, and say we pick $x_1$ for $T_1$ and $x_2$ for $T_2$. The red vertices on the graph represent the vertices that are adjacent to a witness vertex on $T_1 + T_2$, and since all the red vertices form an independent set in the entire graph, this is a valid choice of witness vertices. This characterization says that for any collection of disjoint tree subgraphs the adversary gives us, we can find a witness vertex in every tree, such that if we take all the vertices that are adjacent to a witness vertex in the forest, it will be an independent set on the entire graph. 

We will now explain why this condition is necessary for a graph to be EFX orientable for all 0-1 valuations. Suppose we have a collection of trees $T_1,T_2...T_k$ on $G$ with all their edges having a symmetric value of 1 and all other edges in $G$ having a symmetric value of 0. The idea behind this characterization is that every tree will have at least one vertex which did not receive an edge of weight 1, and hence is envious of its other neighbors on the tree. If there happen to be any more edges between two of the envied vertices, then these edges cannot be oriented without breaking an EFX orientation. Hence, there needs to be a way to carefully choose these witness vertices from the forest where no edge in $G$ appears between any of their forest-neighbors, otherwise this becomes an obstruction that prevents an EFX orientation. As singleton edges are also trees, we used this lemma in our paper to construct matchings that obstructed an EFX orientation, which was sufficient to prove all graphs $G$ of $\chi(G) \geq 4$ were not strongly EFX orientable. Astra extends this idea, using forests where all trees are stars to construct an obstruction for an EFX orientation. 

\begin{definition}
    A star-forest obstruction in $G$ is a subgraph $S \subseteq G$ such that $S = T_1,T_2..T_k$ is a forest where all trees are stars, and no selection of witness vertices exists. That is, for any collection of vertices $v_1,v_2...v_k$ where $v_i \in T_i$, we have that $\bigcup_{i \in [k]} N_S(v_i)$ is not an independent set in $G$. 
\end{definition}

If $G$ has a star-forest obstruction $S$, then setting the edges in $S$ to have a value of 1 and all other edges to have a value of 0 produces a weighted graph with no EFX orientation. Hence, this can be used to create even more families of forbidden substructures, and we will do so with the help of a subdivision lemma. A subdivision of $G$ is replacing some edge $xy \in G$ with $xz$ and $zy$ with $z$ being an entirely new vertex, or in other words placing a new vertex in the middle of an edge in $G$. We say $H$ is a subdivision of $G$ if we can obtain $H$ from $G$ by repeatedly applying the subdivision operation. If we show that forbidden structures still remain forbidden structures after specific subdivisions, we can generate an entire family of forbidden graphs from a single forbidden graph. 

\begin{lemma}[Double Subdivision for Star-Forest Obstructions]\label{lem:doublesub}
    Suppose $G$ has a star-forest obstruction. Subdivide an edge $xy \in G$ twice, obtaining $G'=G-xy+xa+ab+by$ where $a$ and $b$ are new vertices. Then this new graph has a star-forest obstruction. 
\end{lemma}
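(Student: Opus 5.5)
The plan is to build, from a star-forest obstruction $S=T_1\cup\dots\cup T_k$ in $G$, an explicit star-forest obstruction $S'$ in $G'$. We then show that every choice of witnesses for $S'$ in $G'$ produces a non-independent set. The argument splits on whether the subdivided edge $xy$ lies in $S$. Throughout, write $U$ for the union of forest-neighborhoods of the witnesses in $G$, and $U'$ for the analogous set in $G'$. All edges of $G$ other than $xy$ survive in $G'$, so any conflict edge of $G$ inside $U$ other than $xy$ is still a conflict in $G'$ provided its endpoints lie in $U'$.

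\textbf{Case 1: $xy\notin S$.} Take $S'=S\cup\{ab\}$, where $ab$ forms a new one-edge star. The stars of $S$ are untouched, so a witness choice for $S'$ restricts to a witness choice for $S$ in $G$. That restriction yields a conflict edge $e$ inside $U$. If $e\neq xy$, then $e$ is still a conflict in $G'$, since $U\subseteq U'$. If $e=xy$, then $x,y\in U'$. If the witness of $ab$ is $a$, then $b\in U'$ and $by$ is a conflict. If the witness is $b$, then $a\in U'$ and $xa$ is a conflict.

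\textbf{Case 2: $xy\in S$.} Say $xy\in T_i$, and orient things so that $x$ is a center of $T_i$. If $T_i$ is a single edge, either endpoint may play this role. Set $T_i'=T_i-xy+xa$, which is a star centered at $x$ with $a$ replacing the leaf $y$, and add the new star $by$. The vertex $y$ lies only in $T_i$ in $S$, so $S'$ is a star forest in $G'$. Given witnesses for $S'$, I map them back to $G$ by keeping all witnesses outside $T_i'$ and sending a witness $a$ to $y$. This is still a valid witness choice for $S$, so some conflict edge $e$ lies inside $U$. I then check subcases.

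If the witness of $T_i'$ is $x$, then $a\in U'$ replaces $y\in U$. If the $by$-star has witness $y$, then $b\in U'$ and $ab$ is a conflict. If its witness is $b$, then $y\in U'$ as well, so $U\cup\{a\}\subseteq U'$. Now $e$ survives unless $e=xy$; in that case $x\in U'$ and $xa$ is a conflict.

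If the witness of $T_i'$ is a leaf other than $x$ (including $a$, whose image in $G$ is $y$), its contribution is $\{x\}$ in both graphs. Here $y\notin U$, because $y$ can enter $U$ only as a forest-neighbor of a witness in $T_i$. Hence $e\neq xy$, and $U\subseteq U'$ preserves the conflict.

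The main obstacle is choosing $S'$ in Case 2 so that the loss of the edge $xy$ is always compensated. The key idea is that the path $x\,a\,b\,y$ forces a conflict on one of $xa$, $ab$, $by$ whenever $x$ and $y$ would have conflicted. The remaining work is the bookkeeping confirming $U\subseteq U'$ up to the swap of $y$ for $a$.
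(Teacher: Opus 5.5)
Your proposal is correct and follows the paper's proof essentially step for step: you use the same new obstructions ($S\cup\{ab\}$ when $xy\notin S$, and $T_i' = T_i - xy + xa$ together with the new star $by$ when $xy\in S$) and the same subcase split on whether the witness of $T_i'$ is the center $x$ or a leaf. The only difference is presentational: you pull witnesses back to $G$ and push the resulting conflict edge forward to $G'$, which is the contrapositive of the paper's argument by contradiction (and your subcase $e=xy$ when the witness is $x$ is vacuous, since $x\notin U$ when $x$ is the witness of its own star).
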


\begin{proof}
\begin{figure}[h]
\centering
\begin{subfigure}{.5\textwidth}
  \centering
  \includegraphics[width=.5\textwidth]{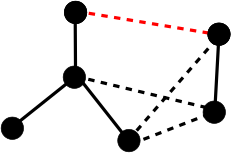}
  \label{}
\end{subfigure}%
\begin{subfigure}{.5\textwidth}
  \centering
  \includegraphics[width=.5\textwidth]{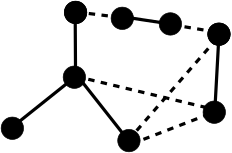}
  \label{}
\end{subfigure}
\caption{Double Subdivision of a Zero Edge}
\label{}
\end{figure}
    Let $F = T_1,T_2...T_k$ be a star forest obstruction in $G$. Suppose that $xy \not\in F$. We then claim that on $G'$, $F' =  T_1,T_2...T_k, ab$ is a star-forest obstruction. Assume not. Then we can find $v_1,v_2...v_k, u$ with $v_i \in T_i$ and $u \in \{a,b\}$ such that $N_{F'}(u) \cup \bigcup_{i \in [k]} N_{F'}(v_i) $ is an independent set in $G'$. We note that $\bigcup_{i \in [k]} N_{F'}(v_i) = \bigcup_{i \in [k]} N_{F}(v_i)$, so $\bigcup_{i \in [k]} N_{F}(v_i)$ is an independent set in $G'$ but not $G$. As $xy$ is the only edge in $G$ but not $G'$, then $xy$ must be contained in the graph induced by $\bigcup_{i \in [k]} N_{F}(v_i)$, so both $x$ and $y$ were contained in $\bigcup_{i \in [k]} N_{F'}(v_i)$ as well. However, $N_{F'}(u)$ consists of exactly one of $a$ or $b$, which means if both $x,y$ are in $\bigcup_{i \in [k]} N_{F'}(v_i)$, then $N_{F'}(u) \cup \bigcup_{i \in [k]} N_{F'}(v_i) $ is not an independent set after all, which gives us a contradiction. 

    \begin{figure}[h]
\centering
\begin{subfigure}{.5\textwidth}
  \centering
  \includegraphics[width=.5\textwidth]{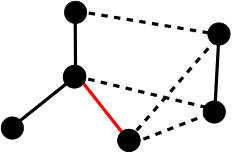}
  \label{}
\end{subfigure}%
\begin{subfigure}{.5\textwidth}
  \centering
  \includegraphics[width=.5\textwidth]{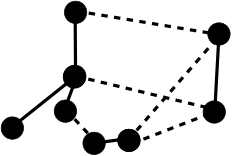}
  \label{}
\end{subfigure}
\caption{Double Subdivision of a Valuable Edge}
\label{}
\end{figure}

    Now suppose that $xy \in F$, and without loss of generality say that $xy \in T_k$ and say $x$ is the center of the star. Set $T_k'$ to be $T_k-y+xa$. We now claim that $F' = T_1,T_2..T_{k-1},T_k',by$ is a star forest obstruction. Suppose we have a collection of vertices $v_1,v_2...v_{k-1},v_k',u$ where $v_i \in T_i$ for $i \in [k-1]$, $v_k' \in T'_k$ and $u \in by$ such that $N_{F'}(u) \cup N_{F'}(v_k') \cup \bigcup_{i \in [k-1]}N_{F'}(v_i)$ is an independent set in $G'$. We first note that since $xy \in T_k$, we have that $\bigcup_{i \in [k-1]}N_{F'}(v_i) = \bigcup_{i \in [k-1]}N_{F}(v_i)$. If $v_k' = x$, then this would force $u = b$ which indicates that $\{y\} \cup  N_F'(x)  \cup \bigcup_{i \in [k-1]}N_{F'}(v_i)$ is an independent set in $G'$. We note that if we add the edge $xy$ this will still remain an independent set, as $x$ is a witness vertex and hence cannot appear in that independent set. Since we know that $N_F(x) \subseteq N_F'(x)+y$, we have that $N_F(x) \cup \bigcup_{i \in [k-1]}N_{F}(v_i)$ is an independent set on $G$. Hence, we have that $N_F'(v_k')=\{x\}$. This gives us that $\{x\} \cup \bigcup_{i \in [k-1]}N_{F'}(v_i)$ is independent in $G'$. Note that this set does not contain $y$ as $y$ is not part of $T_1,T_2..T_{k-1},T_k'$, so this set remains independent with $xy$ added to the graph. On $G$, select the witness vertices to be $v_1,v_2...v_{k-1}$ and $v_k=y$. Then we have $\bigcup_{i \in [k]}N_F(v_i) = \{x\} \cup \bigcup_{i \in [k-1]}N_{F}(v_i)$ which is an independent set in $G$, which gives us a contradiction. Hence, subdividing the same edge twice of a graph that has a star forest obstruction preserves the property that it has a star forest obstruction. 
\end{proof}

We are now ready to construct more families of forbidden substructures. 

\begin{lemma}\label{lem:stpath}
    If $s,t \in G$ have four internally vertex-disjoint paths with two of odd length and two of even length, then $G$ is not strongly EFX orientable. 
\end{lemma}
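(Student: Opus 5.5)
By the monotonicity proposition it suffices to show that the subgraph $H\subseteq G$ formed by the four internally disjoint $s$–$t$ paths alone is not strongly EFX orientable. I will exhibit a star-forest obstruction in $H$. Setting its edges to value $1$ and every other edge to value $0$ then gives an additive symmetric valuation with no EFX orientation.

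\textbf{Reduction to a base graph.} Let the odd paths be $P_1,P_2$ and the even paths $Q_1,Q_2$. Since $G$ is simple, at most one path has length $1$, and it is odd. Each even path has length at least $2$. Every such $H$ can be obtained from a fixed small base graph by repeated double subdivisions, since a double subdivision of an edge lengthens one path by $2$ and preserves its parity. For the base graph I take $H_0$ with paths of lengths $1,3,2,2$:
\[
V(H_0)=\{s,t,a,b,c,d\},\qquad E(H_0)=\{st,\ sa,\ ab,\ bt,\ sc,\ ct,\ sd,\ dt\}.
\]
This works whether or not $G$ has a direct $s$–$t$ path of length $1$. The pattern of lengths $3,3,2,2$ arises from $H_0$ by double-subdividing $st$, and after that every path can be lengthened by double subdivision. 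By Lemma \ref{lem:doublesub}, a star-forest obstruction in $H_0$ therefore propagates to every such $H$, and so the whole lemma reduces to one finite check on $H_0$.

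\textbf{Finding the obstruction in $H_0$ (the main obstacle).} The hard part is producing an explicit star forest $S=T_1,\dots,T_k$ in $H_0$ for which every choice of witnesses makes $\bigcup_i N_S(v_i)$ contain an edge of $H_0$. Naive choices fail. For example, a star centred at $t$ with leaves $\{b,c,d\}$ together with the edge $sa$ admits the witnesses $t$ and $a$, whose forest-neighbourhoods $\{b,c,d\}\cup\{s\}$ are independent in $H_0$. Similarly, a matching on the diamond $\{s,t,c,d\}$ is not an obstruction by itself. The obstruction must therefore use the $s$–$a$–$b$–$t$ path so that the odd cycles $sab t s$ and $sct s$ interact with the even cycle $scτ d s$. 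I would first try forests consisting of a star at $s$ or $t$ with leaves among $c,d$, plus single edges on the long path ($ab$, or $sa$ and $bt$), and check all witness choices by hand.

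\textbf{Fallback.} If none of these works, I would enumerate all star forests of $H_0$ systematically; the graph has only $8$ edges, so this is feasible by hand or computer. If $H_0$ turns out to have no star-forest obstruction at all, I would change strategy. Instead of propagating through Lemma \ref{lem:doublesub}, I would argue directly on $H$ with a $0$–$1$ valuation. The valuable edges would be placed so that, for any orientation, an edge lying between two envied vertices is forced, in the style of the $K_4$ example and of Lemma \ref{cor:fs1}. The parity of each path would then determine which endpoint ends up envied.
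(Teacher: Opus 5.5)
Your reduction is sound and matches the paper. You pass to the union of the four paths by monotonicity, observe that at most one path has length $1$, and use Lemma~\ref{lem:doublesub} to reduce to the base graph with path lengths $1,3,2,2$; your $H_0$ is the paper's base graph up to relabelling. The gap is that you never exhibit a star-forest obstruction in $H_0$, and that finite check is the whole content of the lemma. You even leave open that none might exist, and the fallback of arguing directly with a $0$--$1$ valuation is not an argument.

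Moreover, every forest in the family you propose to try first fails. In each one, the path $sabt$ contributes single edges that are separate components from the star(s) centred at $s$ or $t$ with leaves among $c,d$. Take the centres as witnesses, so their forest-neighbourhoods lie inside $\{c,d\}$. On the path edge take witness $a$ or $b$ for $ab$, $s$ for $sa$, or $t$ for $bt$, contributing only $\{b\}$, $\{a\}$, $\{a\}$ or $\{b\}$ respectively. Since $cd\notin E(H_0)$ and neither $a$ nor $b$ is adjacent to $c$ or $d$, the union is independent.

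The missing idea is to split the two end-edges of the long path between two stars, one centred at each terminal. Take $T_1$ with centre $s$ and leaves $a,d$, and $T_2$ with centre $t$ and leaves $b,c$; this is the paper's forest in your labelling.
\begin{itemize}
\item If both centres are witnesses, the union is $\{a,b,c,d\}$, which contains the edge $ab$.
\item If $s$ and a leaf of $T_2$ are witnesses, the union is $\{a,d,t\}$, which contains $dt$.
\item If $t$ and a leaf of $T_1$ are witnesses, the union is $\{b,c,s\}$, which contains $sc$.
\item If both witnesses are leaves, the union is $\{s,t\}$, which contains $st$.
\end{itemize}
So this is a star-forest obstruction. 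With it, your reduction via double subdivision completes the proof exactly as in the paper.
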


\begin{proof}

    \begin{figure}[h]
        \centering
        \includegraphics[width=0.35\textwidth]{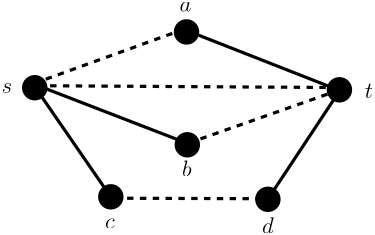}
        \caption{}
        \label{}
    \end{figure}
    Take $s,t$ and consider the graph of 4 internally disjoint paths consisting of $st$, $sat$, $sbt$, and $scdt$. (We note that one of the paths between $s$ and $t$ needs to have length at least 3, because two paths of length 1 would mean that the graph is a multi-graph and not a simple graph). Our stars to make the star-forest-obstruction will be $s,b,c$ and $t,a,d$. If the two witnesses are from $\{b,c\}$ and $\{a,d\}$ that is forbidden since $s$ and $t$ are adjacent. If $s$ is a witness, then $t$ cannot be a witness due to the edge $cd$, but neither can $a$ or $d$ since $b$ is adjacent to $t$. By a similar argument $t$ cannot be a witness, so there is no way to select witnesses for this star forest. Then, by Lemma \ref{lem:doublesub}, since adding two extra vertices to a single path will still maintain a star forest obstruction, this entire class of graphs is forbidden for strongly EFX orientable graphs. 
\end{proof}

\begin{lemma}\label{lem:K4}
    Suppose we have a $K_4$, and note that it has exactly 4 triangles. Repeatedly subdivide the $K_4$ graph. If all 4 triangles are subdivided into odd cycles, then this subdivided graph is not strongly EFX orientable. 
\end{lemma}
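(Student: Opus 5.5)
The plan is to reduce, by parity, to finitely many small base graphs and then lift obstructions back up with Lemma~\ref{lem:doublesub}.

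\textbf{Reduction.} A subdivision $H$ of $K_4$ is determined by the lengths $\ell_e\geq 1$ of the paths replacing the six edges $e$ of $K_4$. Replacing $\ell_e$ by $\ell_e+2$ is exactly one double subdivision of an edge on that path. So every admissible $H$ arises from a base graph $H_0$ with $\ell_e\in\{1,2\}$ and the same parities, by repeated double subdivisions. By Lemma~\ref{lem:doublesub}, it then suffices to show that each base graph $H_0$ has a star-forest obstruction. Such an obstruction survives all the double subdivisions, and setting the forest edges to value $1$ and all other edges to $0$ gives a valuation with no EFX orientation.

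\textbf{Classifying the parity patterns.} Let $p\in\mathbb{Z}_2^{E(K_4)}$ record the parities $\ell_e \bmod 2$. We need every triangle to have odd length sum. The all-ones vector (no subdivision, each triangle of length $3$) is one solution. Any two solutions differ by a vector orthogonal to all triangles. Since the triangles generate the cycle space of $K_4$, these differences are exactly the cut space of $K_4$, which has $8$ elements:
\begin{itemize}
\item the empty cut;
\item the four vertex stars;
\item the three $4$-cycles, coming from the $2$--$2$ bipartitions.
\end{itemize}
Up to symmetry this leaves three base graphs:
\begin{enumerate}
\item $K_4$ itself;
\item $K_4$ with the three edges at one vertex $s$ each subdivided once, so the triangle $xyz$ on the other vertices stays intact;
\item $K_4$ with the edges of a $4$-cycle each subdivided once, so the two diagonals stay single edges.
\end{enumerate}

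\textbf{Obstruction in each base graph.} For $K_4$, take the perfect matching $\{v_1v_2, v_3v_4\}$ as the star forest. Any choice of witnesses has as forest-neighbours one endpoint of each matching edge, and these two vertices are adjacent in $K_4$. This is exactly the example of \cite{Christodoulou23}.

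For the other two bases, I would find two odd cycles sharing exactly one unsubdivided edge.
\begin{itemize}
\item In case 2, the triangle $xyz$ and the $5$-cycle $x\,a\,s\,b\,y\,x$ share only the edge $xy$.
\item In case 3, the two triangles through an unsubdivided diagonal each have length $1+2+2=5$ and share only that diagonal.
\end{itemize}
Lemma~\ref{cor:fs1} says these configurations are not strongly EFX-orientable. However, to apply Lemma~\ref{lem:doublesub} I need an explicit star-forest obstruction, not just non-orientability.

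\textbf{Main obstacle.} The hard step is producing these explicit star-forest obstructions for the two subdivided bases. I expect a matching, or a matching plus one star, to work, in the spirit of Lemma~\ref{lem:stpath}. For case 2, first try the matching $\{xy, as, bz\}$, possibly with $c$ attached to one of these edges to form a star. Then check by hand that every witness choice forces two adjacent forest-neighbours; since these graphs have only $7$--$8$ vertices, a short case analysis suffices.

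If this fails, there is a fallback. Case 2 contains a vertex pair joined by four internally disjoint paths of the mixed parities required by Lemma~\ref{lem:stpath} (try the pair $x,y$ after accounting for routes through $z$ and $s$), whose proof yields a star-forest obstruction that can be transplanted as a subgraph.

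Once each base has an obstruction, monotonicity (the Proposition on subgraphs) together with Lemma~\ref{lem:doublesub} completes the proof.
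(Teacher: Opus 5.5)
\textbf{Gap: the two subdivided base cases are never settled.} Your reduction is correct and agrees with the paper. The cut-space classification is a tidier way to reach exactly the paper's three base graphs: its Case 1, Cases 2 and 4, and Case 3. Lemma~\ref{lem:doublesub} lifts obstructions from these bases, and your perfect matching in $K_4$ is the paper's Case 1. What is missing is a star-forest obstruction for the two subdivided bases. That is the real content of the lemma, and the routes you suggest cannot supply it.

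\emph{Matchings fail.} In case 2, let $a,b,c$ be the subdivision vertices on $sx,sy,sz$. No matching is an obstruction there. From each matching edge, take as forest-neighbour its endpoint in $\{a,b,c\}$ when it has one; these vertices are pairwise nonadjacent. At most one matching edge lies in the triangle, say $xy$. If $a$ was taken, then $a$ is matched to $s$, so $b$ was not taken and you may add $y$; otherwise add $x$. This gives an independent set of forest-neighbours. A similar check shows no matching works in case 3 either. Your concrete candidate $\{xy,as,bz\}$ is not even a subgraph, since $b$ lies on the $s$--$y$ path and $bz$ is not an edge.

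\emph{The Lemma~\ref{lem:stpath} fallback is impossible.} Four internally disjoint paths between two vertices force both to have degree at least $4$. Every subdivision of $K_4$ has maximum degree $3$.

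\emph{Lemma~\ref{cor:fs1} does not help.} As you noted, it only gives non-orientability of the base graph, which Lemma~\ref{lem:doublesub} cannot lift.

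\textbf{The fix.} You need star forests containing genuine two-leaf stars, as in the paper.

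In case 2, take the star at $s$ with leaves $a,c$, together with the edges $by$ and $xz$.
\begin{itemize}
\item If $s$ is the witness, its forest-neighbours $a,c$ are adjacent to $x,z$ respectively. So neither endpoint of $xz$ can be a forest-neighbour.
\item If $a$ or $c$ is the witness, then $s$ is a forest-neighbour. Since $s\sim b$, this forces $y$ rather than $b$ on $by$. Then $y\sim x$ and $y\sim z$ again block $xz$.
\end{itemize}

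In case 3, let the diagonals be $ab$ and $cd$, and let $x,y,z,w$ be the subdivision vertices on $ac,ad,bc,bd$. Take $ab$, the star at $c$ with leaves $x,z$, and the star at $d$ with leaves $y,w$.
\begin{itemize}
\item Witness $c$ is blocked, because $x\sim a$ and $z\sim b$.
\item Otherwise $c$ is a forest-neighbour. Since $c\sim d$, the third star must take witness $d$. That is blocked, because $y\sim a$ and $w\sim b$.
\end{itemize}

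With these two obstructions in place of the matchings you expected, the rest of your argument goes through.
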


\begin{proof}
    Let $a,b,c,d$ be the original vertices in the $K_4$. We split this into cases based on the number of edges incident to $a$ that were subdivided into a path of odd length. 

    Case 1: All three edges incident to $a$ were subdivided into paths of odd length. Then, for the subdivided cycles of $abc$ and $acd$ to be odd cycles, the edges $bc$ and $cd$ must have also been subdivided into odd paths, which means $bd$ must have been subdivided into an odd path. Then, one can obtain this graph by taking repeated double subdivisions of a $K_4$, so we just need to prove $K_4$ has a star forest obstruction. We note that taking $ab$ and $cd$ suffices to form a star forest obstruction in a $K_4$. 

    \begin{figure}[h]
        \centering
        \includegraphics[width=0.24\textwidth]{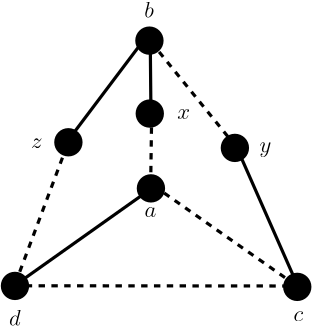}
        \caption{Star Forest Obstruction for Case 2}
        \label{}
    \end{figure}

    Case 2: Exactly one edge incident to $a$ is subdivided into a path of even length. Say that edge is $ab$, and $ac, ad$ were subdivided into odd paths. This forces $bc$ and $bd$ to be subdivided into even paths for the sake of $abc$ and $abd$ having an odd length after being subdivided. Then $cd$ is subdivided into a path of odd length as $bcd$ has odd length. Consider the $K_4$ $a,b,c,d$, but we subdivide all edges incident to $b$ exactly once, to get $bxa, byc, bzd$. Then we can obtain the graphs covered by Case 2 by double subdivision of each edge, so all that is left is to find a star forest obstruction. We pick the three stars $bx+bz$, $yc$ and $ad$. Now let us choose a set of 3 witnesses. We cannot pick $b$ as the witness, because $x$ is adjacent to $a$ and $z$ is adjacent to $d$, so whatever non witness we pick in $ad$ will share an edge with the neighborhood of $b$ in the forest. Then, either $x$ or $z$ needs to be the witness in $bx+bz$, forcing us to pick $y$ as the witness in $cy$. But $c$ is adjacent to both $a$ and $d$ meaning we have no choice in picking a witness for $ad$. Hence this is a star forest obstruction. 

    \begin{figure}[h]
        \centering
        \includegraphics[width=0.24\textwidth]{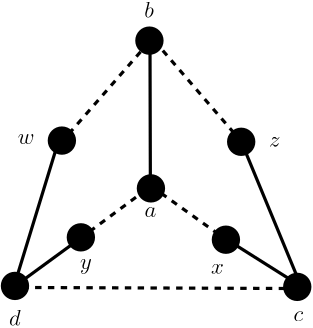}
        \caption{Star Forest Obstruction for Case 3}
        \label{}
    \end{figure}

    Case 3: Exactly two edges incident to $a$ are subdivided into a path of even length. Say $ab$ is the edge that is subdivided into an odd path, and $ac,ad$ are subdivided into even paths. Then edges $bc$ and $bd$ must be subdivided into even paths as well for $abc$ and $abd$ to be odd. Consequently, $cd$ is subdivided into a path of odd length. Take the $K_4$ $a,b,c,d$ and subdivide every edge that is not $ab$ or $cd$ once, to get $axc$, $ayd$, $bzc$, and $bwd$. Then we can recover any graph covered by this case with repeated double subdivisions, so we just need to find a star forest obstruction. We will pick the stars $ab$, $cz+cx$, and $dy+dw$. We note that picking $c$ as the witness makes it impossible to pick a witness in $ab$ by virtue of edges $bz$ and $ax$ in the graph. Hence, either $x$ or $z$ needs to be a witness, which forces $d$ to be a witness in $dy+dw$. But then this again makes it impossible to pick a witness in $ab$ due to the edges $ay$ and $bw$, hence this is a star forest obstruction. 

    Case 4: All edges incident to $a$ are subdivided into a path of even length. This forces $bc$, $bd$, and $cd$ to all be subdivided into a path of odd length. Then, note that graphs in this class are isomorphic to a graph in the class of graphs covered by Case 2, which you can see by mapping $a$ in this graph to $b$ on the respective graph covered by Case 2. We can then use the star-forest obstruction in Case 2, and we are done. 
\end{proof}

We are now ready to prove that a nonbipartite 2-connected graph $G$ with an odd cycle transversal of $\tau(G) \geq 2$ is not strongly EFX orientable. Assume, for contradiction, that this is false. We can assume that $G$ is not 3-connected, as being bipartite is necessary for a 3-connected graph to be strongly EFX orientable. Let us choose $H \subseteq G$ such that $H$ is nonbipartite, 2-connected, $\tau(H) \geq 2$, and $H$ has as few edges as possible. 

We can find vertices $s,t$ that disconnect $H$ after both are deleted. We say two (not necessarily induced) subgraphs $A,B$ form a $(s,t)$-valid partition if all of the following properties hold. 

\begin{itemize}
    \item $H = A \cup B$, and both $A$ and $B$ are connected. 
    \item $E(A) \cap E(B) = \emptyset$, and $|V(A)|,|V(B)| \geq 3$. 
    \item $V(A) \cap V(B) = \{s,t\}$.
    \item $\tau(A) \geq \tau(B)$.
\end{itemize}

\begin{lemma}\label{lem:cutset}
    For any separating set $s,t \in H$ and any $(s,t)$-valid partition $A,B$ on $H$, $A$ is not bipartite and $B$ is a path.
\end{lemma}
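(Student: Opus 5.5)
We first show that $A$ is not bipartite, then that $B$ is bipartite, and finally that a non-path $B$ could be replaced by a single $s$–$t$ path, contradicting the edge-minimality of $H$. Throughout we use that $H\subseteq G$ is strongly EFX orientable by monotonicity, so $H$ contains none of the forbidden substructures of Lemma~\ref{cor:fs1}, Lemma~\ref{lem:stpath} and Lemma~\ref{lem:K4}.

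\textbf{Step 1: $A$ is not bipartite.} Suppose $A$ is bipartite. Since $\tau(A)\ge\tau(B)$, $B$ is bipartite as well. In a connected bipartite graph all $s$–$t$ paths have the same parity, so $A$ and $B$ each fix a parity for $s$–$t$ paths. Consider $H-s$. It is the union of $A-s$ and $B-s$, and these share only the vertex $t$. Both pieces are bipartite, and two bipartite graphs glued at a single vertex form a bipartite graph (recolour one component so the colours of $t$ agree). Hence $\tau(H)\le 1$, contradicting $\tau(H)\ge 2$.

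\textbf{Step 2: $B$ is bipartite.} Suppose $B$ is not bipartite. Take an odd cycle $C_A\subseteq A$ and an odd cycle $C_B\subseteq B$; they share at most the vertices $s,t$. There are three cases.
\begin{itemize}
\item If they share no vertex, then since $H$ is connected they are disjoint odd cycles joined by a path.
\item If they share exactly one vertex, they are odd cycles sharing exactly one vertex.
\end{itemize}
Both of these are forbidden by Lemma~\ref{cor:fs1}.
\begin{itemize}
\item If they share both $s$ and $t$, each cycle splits into two $s$–$t$ arcs of opposite parity. The four arcs are internally disjoint, because the interiors of the arcs of $C_A$ lie in $A-\{s,t\}$ and those of $C_B$ lie in $B-\{s,t\}$.
\end{itemize}
This gives four internally disjoint $s$–$t$ paths, two odd and two even, which Lemma~\ref{lem:stpath} forbids. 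Hence $B$ is bipartite.

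\textbf{Step 3: $B$ is a path.} Suppose $B$ is not a path. Since $B$ is connected with $|V(B)|\ge 3$, choose an $s$–$t$ path $P\subseteq B$; then $|E(P)|<|E(B)|$. Let $H'=A\cup P$, which has fewer edges than $H$. We check that $H'$ has the three properties that $H$ was chosen to minimise edges over:
\begin{itemize}
\item \emph{Nonbipartite:} $H'$ contains $A$, which is nonbipartite by Step 1.
\item \emph{$\tau(H')\ge 2$:} suppose $H'-x$ is bipartite for some vertex $x$.
\begin{itemize}
\item If $x$ is an internal vertex of $P$, then $A\subseteq H'-x$ is bipartite, contradicting Step 1.
\item Otherwise $x\in V(A)$. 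Take a proper 2-colouring of $H'-x$. Since $B$ is bipartite and every $s$–$t$ path in $B$ has the parity of $P$, the colours it assigns to $s$ and $t$ (those not equal to $x$) extend to a proper 2-colouring of $B-x$. If $s$ and $t$ lie in different components of $A-x$, we first recolour a component so that this holds. The result is a 2-colouring of $H-x$, so $\tau(H)\le 1$, a contradiction.
\end{itemize}
\item \emph{2-connected:} this is the step needing the most care. Let $x$ be a vertex of $H'$.
\begin{itemize}
\item If $x$ is internal to $P$, then $H'-x$ contains the connected graph $A$, and the two remaining subpaths of $P$ hang off $s$ and $t$, so $H'-x$ is connected.
\item If $x\in V(A)$, note that every component of $A-x$ must contain $s$ or $t$. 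Otherwise that component would attach to the rest of $H$ only through $x$, since $B$ meets $A$ only in $\{s,t\}$, and $x$ would be a cut vertex of $H$. The path $P-x$ (or $P$ minus an endpoint, which is still connected and meets the other endpoint) joins the pieces containing $s$ and $t$. Hence $H'-x$ is connected.
\end{itemize}
\end{itemize}
Thus $H'\subseteq G$ is a 2-connected nonbipartite graph with $\tau(H')\ge 2$ and fewer edges than $H$, contradicting the minimality of $H$. Therefore $B$ is a path.
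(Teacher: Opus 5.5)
Your proof is correct and follows the paper's argument. You rule out both sides being bipartite (every odd cycle would then pass through both $s$ and $t$), rule out both sides being nonbipartite via Lemma~\ref{cor:fs1} and Lemma~\ref{lem:stpath}, and then replace $B$ by an $s$--$t$ path $P$ and use the edge-minimality of $H$. Your checks that $A\cup P$ is 2-connected (every component of $A-x$ contains $s$ or $t$) and has $\tau\ge 2$ (extending a 2-colouring of $(A\cup P)-x$ across $B$) are slightly more direct versions of the paper's cycle-based checks.
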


\begin{proof}
    If both $A$ and $B$ are bipartite, then any odd cycle in $H$ must contain an edge $x \in A$ and an edge $y \in B$ at the same time. For both edges to be part of the cycle, the cycle must pass through both $s,t$: with a traversal through one of $s,t$ needed to go outside of $A$ from $x$ and a traversal through the other needed to return. So, this means that every single odd cycle in the graph passes through both $s$ and $t$, and hence deleting one of them makes the graph bipartite, contradicting $\tau(H) \geq 2$.

    Now suppose that both $A$ and $B$ are not bipartite. Then we can find two odd cycles $C_1 \subseteq A$ and $C_2 \subseteq B$. If $C_1$ and $C_2$ meet at no more than one vertex, then either they are disjoint but connected in $H$, or they meet at exactly one vertex, both of which are forbidden by Lemma \ref{cor:fs1}. Now suppose they meet at both $s$ and $t$. We partition $C_1$ into two $s,t$ paths $P_1,P_2$, and partition $C_2$ into two $s,t$ paths $P_3,P_4$. We note that these four $s,t$ paths are internally disjoint as $A$ and $B$ as $s,t$ are the only vertices in both $A$ and $B$. Moreover, since $C_1$ is an odd cycle, the lengths of $P_1,P_2$ have different parities, and same with $P_3,P_4$. Then $P_1,P_2,P_3,P_4$ are 4 internally disjoint $s,t$-paths with 2 even and 2 odd, which is a forbidden structure by Lemma \ref{lem:stpath}. 

    So exactly one of $A$ and $B$ is not bipartite. Then $A$ must be nonbipartite. We can find a $s,t$ path $P$ of length at least $2$ in $B$. This is because if we can take a vertex $b \in B$ not in $A$, and a vertex $a \in A$ not in $B$, and there is a cycle in $H$ that passes through both by 2-connectivity. Then $s,t$ must appear on this cycle, and we can make such a $s,t$ path of length at least 2 in $B$ by taking the part of the cycle that traverses through $b$. We claim that replacing $B$ with $P$ keeps the graph both 2-connected and $\tau(A \cup P) \geq 2$. Indeed, take any $a_1,a_2 \in A$, and take some cycle $C$ that passes through both $a_1,a_2$ in $H$. If $C$ is entirely contained in $A$, then it exists on $A \cup P$ as well. Otherwise if $C$ went outside of $A$, we can split $C$ into two $s,t$ paths $P_1 \cup P_2$ with $P_1$ entirely contained in $A$. We then replace $P_2$ with $P$ to get a cycle on $A \cup P$ that passes through $a_1$ and $a_2$. Similarly, for any $b_1,b_2 \in P-\{s,t\}$, they appear on the same cycle in $A \cup P$ by taking $P$ and a path between $s,t$ contained in $A$ as the cycle. Then, to prove 2-connectivity, we need to show any $a \in A$ and $b \in P-\{s,t\}$ are also 2-connected. Indeed, note that for any vertex deleted from $A \cup P$, $b$ will still be connected to at least one of $s$ or $t$, and without loss of generality say it is $s$. Since $V(A)$ is 2-connected within $A \cup P$, $a$ can reach $b$ through $s$,  hence we have 2-connectivity of $A \cup P$.

    So now we have to prove $\tau(A \cup P) \geq 2$. Assume not, then since $A$ is nonbipartite and $B$ is bipartite, all odd cycles in $A \cup P$ pass through some vertex $x \in A$. Then, there is an odd cycle $C$ on $H$ that doesn't pass through $x$ (by virtue of $\tau(H) \geq 2)$ and hence must traverse through an edge in $B-E(P)$, but also must traverse through an edge in $A$ since $B$ is bipartite. Break $C$ into $s,t$ paths $P_1, P_2$ where $P_2$ is entirely contained in $B$ and contains an edge outside of $A$. We first note that $P_2$ must have a different parity than $P$, otherwise $P_1 \cup P$ would form an odd cycle in $A \cup P$ that doesn't traverse through $x$, contradicting the fact that all odd cycles in $A \cup P$ pass through $x$. We note that both $P,P_2$ are $s,t$ paths contained entirely in $B$. However, since $B$ is bipartite, and $s,t \in B$, this means that all paths between $s$ and $t$ in $B$ must have the same parity. So, $P,P_2$ cannot have different parities either, which is a contradiction. Hence $\tau(A \cup P) \geq 2$. Then, by minimality of $H$, we have that $B$ must be a path. 
\end{proof}

We now need to utilize Lemma \ref{lem:K4}. We now define a "splitting off" operation for vertices of degree 2. In a graph $G$, if a vertex $x$ has exactly two neighbors $u,v$, we say that a splitting-off at $x$ is forming a new \emph{multi}graph $G'=G-x+uv$, essentially creating a shortcut between $u,v$ without $x$. We can think of splitting off as the inverse operation of a subdivision, as $G$ is a subdivision of $G'$. We note that this operation, however, can turn a simple graph into a multigraph, with the possibility of self-loops as well. 

Assign all edges on $H$ a weight of 1. For a vertex $x$ with degree 2, when splitting it off by removing $xu,xv$ and adding $uv$, assign $w(uv)=w(xu)+w(xv)$. We repeatedly split off all vertices of degree 2 in $H$ in this manner to obtain a new graph $K$ with no vertices with degree 2. Note that the weight of each edge in $K$ represents the length of the original path in $H$. 

\begin{lemma}
    $K$ is a simple, 3-connected graph. 
\end{lemma}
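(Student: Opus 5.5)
The plan is to transfer the structure of $H$ to $K$, using four facts: $H$ is 2-connected, $\tau(H)\geq 2$, $H$ has the fewest edges among such subgraphs, and Lemma \ref{lem:cutset} holds for every $(s,t)$-valid partition.

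First, I will set up the correspondence between $H$ and $K$.
\begin{itemize}
\item $H$ is not a cycle, since an odd cycle has $\tau=1$. So $H$ has \emph{branch vertices}, meaning vertices of degree at least $3$.
\item The vertices of $K$ are exactly these branch vertices, and their degrees are unchanged by splitting off. Hence $\delta(K)\geq 3$.
\item Each edge of $K$ corresponds to a \emph{thread} of $H$: a path between two branch vertices whose interior vertices all have degree $2$.
\item No cycle of $H$ consists only of degree-2 vertices, since $H$ is connected and is not a cycle.
\end{itemize}

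Next, I will show $K$ is simple.
\begin{itemize}
\item \emph{No loops.} A loop at $u$ would be a cycle of $H$ through $u$ whose other vertices all have degree $2$. Then $u$ would be a cut vertex of $H$, because $H$ is not just that cycle. This contradicts 2-connectivity.
\item \emph{No parallel edges.} Suppose two threads $P_1,P_2$ join $u$ and $v$. Let $B=P_1\cup P_2$, which is a cycle, and let $A$ be $H$ with the interiors of $P_1,P_2$ deleted.
\item If $A$ has at most $2$ vertices, then $H$ is a cycle or a theta graph on $u,v$. Every odd cycle then passes through $u$, so $\tau(H)\leq 1$, a contradiction.
\item Otherwise $A$ is connected. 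A component of $A$ attached to only one of $u,v$ would make that vertex a cut vertex of $H$.
\item So $\{u,v\}$ with $A,B$ (ordered by $\tau$) is a $(u,v)$-valid partition. By Lemma \ref{lem:cutset} the side with smaller $\tau$ is a path.
\item It cannot be $B$, which is a cycle. If it is $A$, then $H$ is a union of three internally disjoint $u$–$v$ paths, and again $\tau(H)\leq 1$. Either way we have a contradiction.
\end{itemize}

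Then I will show $K$ is 3-connected. Since $K$ is simple with $\delta(K)\geq 3$, it has at least $4$ vertices.
\begin{itemize}
\item A cut vertex of $K$ would be a cut vertex of $H$. The reason is that deleting a vertex of $K$ destroys only threads incident to it, and a deleted thread's interior hangs off a surviving branch vertex or disappears.
\item Now suppose $\{s,t\}\subseteq V(K)$ separates $K$. Then $\{s,t\}$ separates $H$, with branch vertices of $H$ on both sides.
\item Group the components of $H-\{s,t\}$ into two nonempty families, each containing a branch vertex. Let $A$ and $B$ be the subgraphs induced by each family together with $s,t$ and their edges into that family.
\item Assign the edge $st$ and any thread from $s$ to $t$ to one side, choosing so that both sides stay connected and have at least $3$ vertices.
\item Order the two sides by $\tau$. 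This gives an $(s,t)$-valid partition.
\item By Lemma \ref{lem:cutset}, $B$ is a path. But $B$ contains a branch vertex $w\notin\{s,t\}$. Since $V(A)\cap V(B)=\{s,t\}$, we have $d_B(w)=d_H(w)\geq 3$, so $B$ is not a path. This is a contradiction.
\end{itemize}

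The main obstacle is this last step: choosing the two sides carefully enough that every condition of a valid partition holds, especially the connectivity of each side and the requirement that both sides contain a branch vertex. I would handle it by first putting one component of $K-\{s,t\}$, with its threads, on one side and everything else on the other. Any stray thread from $s$ to $t$ then goes on whichever side keeps both sides connected; it does not change which side contains a branch vertex.
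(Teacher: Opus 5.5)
Your proof is correct and follows essentially the same route as the paper. Loops are excluded because they would give a cut vertex, and both parallel edges and 2-cuts of $K$ are excluded by building a valid partition of $H$, applying Lemma \ref{lem:cutset}, and noting that the side forced to be a path contains a cycle or a branch vertex (the paper says instead that this side would collapse to a single edge of $K$).

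One small fix is needed. If one of the parallel threads is the single edge $uv$, your $A$ still contains $uv$, so $E(A)\cap E(B)\neq\emptyset$. Define $A$ by also deleting $E(P_1)\cup E(P_2)$, as the paper does by letting $C$ be the remaining edges; your connectivity argument for $A$ goes through unchanged.
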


\begin{proof}
    $K$ has no self-loops, because a self loop at a vertex $v$ would be a cycle on the original graph, which would be disconnected from $H$ once $v$ is deleted. Furthermore, $K$ has more than 2 vertices, because if $K$ consisted of 2 vertices, $H$ would be a set of internally disjoint paths between two vertices, and deleting a vertex gets rid of all of the cycles, contradicting $\tau(H) \geq 2$.

    Now, suppose there exist $a,b$ joined by two parallel edges $p,q$. Let $D$ be the subgraph on $H$ corresponding to $p,q$ and $C$ contain the rest of the edges on $H$. Note that $C$ and $D$ only overlap at $a,b$. Moreover, $a,b$ is a cutset of $H$, because there must have been a vertex originally on one of $p,q$ in $H$, and deleting both $a$ and $b$ isolates that vertex. This also implies $|V(D)| \geq 3$, and $|V(C)| \geq 3$ as $K$ has at least 3 vertices. Clearly, $D$ is connected, and $C$ must be connected too, otherwise one of $a$ or $b$ would be a cut vertex of $H$. Then one of $C,D$ or $D,C$ forms an $(a,b)$-valid partition on $H$, so by Lemma \ref{lem:cutset}, one of $C,D$ must be nonbipartite on $H$ and the other is a path. If $D$ is nonbipartite, then $C$ must be a path, but this again gives us that $H$ is a set of internally disjoint paths between $a,b$, contradicting the fact that $\tau(H) \geq 2$. But $D$ is not a path, hence a contradiction, so no multi-edges exist.  

    Then since the minimum degree of $K$ is at least 3, $K$ has at least 4 vertices. $K$ is 2-connected by virtue of $H$ being 2-connected, but suppose that $K$ has a cut-set $a,b$ that splits $K$ into $X,Y$. Let $X'$ be the graph induced by $X,a,b$ and let $Y'$ be the graph induced by $Y,a,b$ but the edge $ab$ is removed if it exists. Lift both $X',Y'$ to $H$, and note that they now form an $(a,b)$-valid partition (after correctly ordering the parts). Then one of $X',Y'$ must be lifted to a path on $H$, which meant it was contracted into a single edge on $K$. But then one of $X,Y$ would be empty, a contradiction.  
\end{proof}

We will now find a $K_4$ subdivision in $K$ where the sum of edge weights in each cycle corresponding to a triangle is odd, which will give us our desired contradiction.

\begin{lemma}
    $H$ is not strongly EFX orientable. 
\end{lemma}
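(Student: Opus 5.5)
We want to find, inside $K$, a subdivision of $K_4$ in which each of the four triangle-cycles has odd total weight. Such a subdivision lifts to a subgraph of $H$: each edge of $K$ of weight $w$ is a path of length $w$ in $H$, and a $K_4$-subdivision in $K$ lifts to a $K_4$-subdivision in $H$ with the same cycle parities. Lemma \ref{lem:K4} then says this subgraph is not strongly EFX orientable, and monotonicity of strong EFX orientability transfers this to $H$. So the whole task is finding the odd $K_4$-subdivision in $K$.

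\textbf{Step 1 (parity setup).} Treat $K$ as a $\mathbb{Z}_2$-weighted graph with $w \bmod 2$. A cycle of $K$ is odd iff it lifts to an odd cycle of $H$. Odd cycle transversals of $H$ correspond to transversals of odd cycles in $K$, except that deleting an internal vertex of a subdivided path kills exactly the cycles through that $K$-edge. Hence $\tau(H)\ge 2$ means two things:
\begin{itemize}
\item no vertex of $K$ meets all odd cycles;
\item no edge of $K$ meets all odd cycles.
\end{itemize}
Moreover $K$ is 3-connected with minimum degree at least 3 by the previous lemma.

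\textbf{Step 2 (find a good odd cycle).} Apply Tutte's Cycle Space Theorem as in the 3-connected lemma, to the parity-weighted cycle space. The map $C \mapsto w(C) \bmod 2$ is linear and nonzero, since $H$ is nonbipartite. So some induced non-separating cycle $C$ of $K$ is odd in weight.

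$K-V(C)$ is connected. If it contains an odd cycle, we have two disjoint odd cycles of $H$ joined by a path, which is forbidden by Lemma \ref{cor:fs1}. So $K-V(C)$ is ``weight-bipartite'': it admits a 2-coloring $c$ with $c(x)+c(y)\equiv w(xy)$ on its edges.

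Every vertex of $C$ has a neighbor off $C$, because $C$ is induced and degrees are at least 3. For each $v\in C$ and each edge $vu$ with $u\notin C$, define the label $\ell(vu)=c(u)+w(vu)$. Two attachments $vu$ and $v'u'$ whose labels differ from the ``parity along $C$'' give an odd cycle sharing a $C$-path with $C$.

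\textbf{Step 3 (build the $K_4$).} Any odd cycle sharing exactly one edge, or one vertex, with $C$ in the bad configurations is forbidden by Lemma \ref{cor:fs1}. Similarly, two disjoint ears with parity patterns giving four $s,t$-paths are forbidden by Lemma \ref{lem:stpath}. We aim to show that since no single vertex or edge of $C$ meets all odd cycles, there exist two ``crossing'' ears $P_1$ (endpoints $a,c$ on $C$) and $P_2$ (endpoints $b,d$ on $C$), interleaved in the order $a,b,c,d$ around $C$ and internally disjoint through $K-V(C)$. If all ears were nested or parallel, 3-connectivity plus the parity constraints would force every odd cycle through a common vertex or edge of $C$.

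Given crossing ears, $C\cup P_1\cup P_2$ is a $K_4$-subdivision with branch vertices $a,b,c,d$. Write $\alpha,\beta,\gamma,\delta$ for the parities of the arcs $ab,bc,cd,da$ of $C$, with $\alpha+\beta+\gamma+\delta=1$, and $p_1,p_2$ for the ear parities. The four triangles have parities $\alpha+\beta+p_1$, $\gamma+\delta+p_1$, $\beta+\gamma+p_2$, and $\alpha+\delta+p_2$.

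Since $\alpha+\beta+\gamma+\delta=1$, the two cycles through $P_1$ have opposite parity, and likewise for $P_2$. So we never get all four triangles odd directly. In each configuration, then, we use the even triangles: an even triangle together with $C$ and the remaining ear yields two odd cycles in one of the forbidden configurations (sharing one edge, sharing one vertex, or four $s,t$-paths of mixed parity). Alternatively, we reroute one ear through $K-V(C)$ with flipped parity, using the weight-bipartite coloring.

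\textbf{Main obstacle.} The crux is this last step. We must choose the odd cycle and the ears so that some $K_4$-subdivision has all four triangles odd, rather than merely crossing. I would first try taking the odd $K_4$ as two odd cycles meeting along a path, extended by a crossing ear whose parity is adjusted using 3-connectivity (Menger gives three disjoint paths from a component to $C$). When parity cannot be adjusted, I would show directly that some configuration already falls under Lemma \ref{cor:fs1} or Lemma \ref{lem:stpath}, which contradicts the minimal choice of $H$ just as well.
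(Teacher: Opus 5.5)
Your Steps 1 and 2 are sound and agree with the paper: $C$ is an odd induced non-separating cycle of $K$, $R=K-V(C)$ is connected and weight-bipartite, and your labels $\ell(vu)=c(u)+w(vu)$ are exactly the cross-edge parities after re-signing so that every edge of $R$ is even. The gap is Step 3, and the construction there cannot work. In a $K_4$-subdivision with all four triangles odd, each 4-cycle is the symmetric difference of two triangles, so it is even. Hence the odd cycle $C$ can never be the 4-cycle through all four branch vertices, which is why your crossing ears always give exactly two odd triangles. Your fallback does not rescue this. The two odd triangles of $C\cup P_1\cup P_2$ share an arc of $C$, and every odd cycle of $C\cup P_1\cup P_2$ contains that arc. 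If the arc has an interior vertex $m$, then $m$ has degree $2$ and meets every odd cycle, so $C\cup P_1\cup P_2$ is itself strongly EFX orientable by Lemma \ref{lem:suf2}. More generally, two odd cycles sharing a path of length at least $2$ are not among the configurations of Lemma \ref{cor:fs1} or Lemma \ref{lem:stpath}. Your claim that nested or parallel ears force every odd cycle through a common vertex or edge of $C$ is also unproved, and that is exactly where $\tau(H)\ge 2$ would have to enter.

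The missing idea, which your closing remark about three paths from $R$ to $C$ points toward, is to use $C$ as one of the \emph{triangles} of the $K_4$, with the fourth branch vertex inside $R$. Switching at a vertex (flipping the labels of all its edges) preserves the parity of every cycle. After making $R$ all even, switch at vertices of $C$ so that each has an even cross edge, and among such choices maximize the number of odd edges on $C$; this number is odd.

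If there are at least three odd edges on $C$, choose $a,b,c\in C$ splitting $C$ into three odd arcs. Take even cross edges $aa',bb',cc'$ and join $a',b',c'$ by a minimal subtree of the connected graph $R$. Its branch vertex $r$, together with $a,b,c$, spans a $K_4$-subdivision whose triangles are $C$ and three cycles each made of an odd arc plus two even spokes. All four triangles are odd, so Lemma \ref{lem:K4} applies.

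If there is exactly one odd edge $pq$, maximality rules out any vertex of $C$ other than $p,q$ having an odd cross edge: switching there would keep an even cross edge and create three odd edges on $C$. For the same reason, $p$ and $q$ cannot both have one. So every odd edge of $K$ is incident to a single vertex, say $p$, which makes $H-p$ bipartite and contradicts $\tau(H)\ge 2$. This is the paper's argument, and it replaces your ear case analysis entirely.
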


\begin{proof}
    For each edge $e$ in $K$, assign it a label of 1 if $w(e)$ is odd and 0 otherwise. Note that there exists some cycle $C$ with a label sum of 1 in $\mathbb{Z}_2$, corresponding to an odd cycle in $H$. Create a function $f:\mathbb{Z}_2^{|E(K)|}\rightarrow \mathbb{Z}_2$, where the edge indicator vectors for each subgraph of $K$ are mapped to the parity of odd-labeled edges in that subgraph. We note that $f$ is a linear map. Since $K$ is 3-connected, by Tutte's Cycle space theorem, we can find a collection of induced non-separating cycles $C_1,C_2..C_l$ such that $C=C_1+C_2...+C_l$. This also means that $1=f(C)=f(C_1)+f(C_2)...+f(C_l)$ hence some cycle $C' \in \{C_1,C_2...C_l\}$ has $f(C')=1$, so its edge weights sum to an odd number. 

    Let $R$ be the subgraph of $K$ induced by $V(K)-V(C')$. Clearly $R$ has no cycle that has the sum of its edge weights odd, or this would imply that $H$ has two disjoint odd cycles. We will now change the 0-1 edge labels on $K$, but do so in a way that preserves the parity of all the cycles. We define the operation $Switch(v)$ on the graph $K$ as swapping the labels of all the edges incident to $v$. We note that this preserves the parity of all the cycles, because for any cycle that passes through $v$, exactly two edges in the cycle, $e,f$ are incident to $v$. The sum of the two edge labels was $w(e)+w(f) \mod 2$ before the switch, and is $w(e)+1+w(f)+1 \mod 2$ after the switch, which yield an equal parity. As all other edges in the cycle retained the same label, the parity of this cycle is preserved, and we can safely apply $Switch(v)$ at any vertex. 

    We will first apply the $Switch$ operation to change the label of every edge within $R$ to $0$. Construct a BFS tree through $R$ at any arbitrary root vertex $r \in R$, and give the vertices a label of $0$ or $1$ depending on the parity of the weighted distance from the vertex to $r$ on the tree. We again note that $R$ does not have any cycles of weighted odd parity. We then apply $Switch(v)$ at any vertex labeled 1 (note this may also affect some edges between $R$ and $C'$). If an edge in $R$ had both endpoints labeled 1, then it must have had a weight of 0, so both of its endpoints applying switch keeps it a weight of 0. Similarly, if an edge in $R$ had both endpoints labeled 0, then it must have had a weight of 0 and still has a weight of 0 after the switch. If an edge in $R$ has both endpoints with different labels, then such an edge must have a weight of 1, and exactly one endpoint had switch applied to it, so it must now have a weight of 0. 

    Now, for any vertex in $v \in C'$, we can ensure it is incident to an edge with label 0 that crosses between $C'$ and $R$ by applying $Switch(v)$ if it is only incident to cross edges with label 1. Note that every vertex in $C'$ must be incident to at least one cross edge as $K$ has minimum degree 3. We apply $Switch$ on the vertices in $C'$ so that all vertices are incident to a cross edge with label 0, and the number of edges with label 1 in $C'$ is maximized. Since $C'$ has odd parity, the number of edges in $C'$ with label 1 must be odd. Assume that we can find at least 3 edges with label 1 in $C'$. Select a pair $e,f$ of edges of label 1 such that the distance between the two edges is minimized. We can find vertices $a,b,c$ such that $e$ is the only edge of label 1 between $a$ and $b$, and $f$ is the only edge of label 1 between $b$ and $c$. Then $a,b,c$ partition $C'$ into three paths of odd parity. Let $aa'$, $bb'$, and $cc'$ be the cross edges of label 0. Extend these edges so they meet at $R$ in some minimal tree $T$. Since this tree has 3 leaves, it must have a vertex $r \in R$ somewhere of degree 3. We note that $C'$ and $T$ form a $K_4$ subdivision, with $r,a,b,c$ as the vertices. $C'$ is an odd cycle and all paths in $T$ from $r$ to $a,b,c$ have even parity. Then, this is a $K_4$ subdivision where the triangles are lifted to odd cycles, and since all switch operations preserve the parity of all cycles, lifting this to $H$ gives us our contradiction. 

    Then, we handle the case where $C'$ only has one odd edge, which we call $pq$. We call a vertex at $C'$ flexible if both 0,1 appear as labels on its cross edges. We note that by construction, non-flexible vertices only have 0 as a label on its cross edges. If a vertex $s \in C'$ that is not $p$ or $q$ is flexible, we can apply $Switch(s)$ so $C'$ has at least 3 odd edges, a contradiction. Both $p$ and $q$ cannot be flexible at the same time either, because applying switch on both will still keep a label of 1 at $pq$ but turn the other two incident edges to have a label of 1. Hence, all edges of label 1 on the graph must be incident to one of $p,q$ say $p$. But then this means all odd cycles in $K$ pass through $p$, and hence deleting $p$ makes $H$ bipartite, a contradiction. Hence $H$ cannot be strongly EFX-orientable. 
\end{proof}

Hence this gives us the following theorem.

\begin{theorem}
    If $G$ is strongly EFX-orientable and 2-connected, then $\tau(G) \leq 1$. 
\end{theorem}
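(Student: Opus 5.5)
The plan is to argue by contradiction and collect the pieces already established in this section. Suppose $G$ is $2$-connected and strongly EFX-orientable, but $\tau(G) \geq 2$. Then $G$ is nonbipartite, since a bipartite graph has $\tau = 0$.

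\textbf{Case 1: $G$ is $3$-connected.} The lemma on $3$-connected graphs says that strong EFX orientability forces $G$ to be bipartite. This contradicts $G$ being nonbipartite, so this case is impossible.

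\textbf{Case 2: $G$ is not $3$-connected.} Consider the family of subgraphs $H \subseteq G$ that are nonbipartite, $2$-connected and satisfy $\tau(H) \geq 2$. This family is nonempty because $G$ itself belongs to it. Choose $H$ in the family with the fewest edges.
\begin{itemize}
\item By the Monotonicity Proposition, $H$ is strongly EFX-orientable, because it is a subgraph of $G$.
\item If $H$ is $3$-connected, the $3$-connected lemma again forces $H$ to be bipartite, a contradiction.
\item Otherwise $H$ has a separating pair $\{s,t\}$, and the machinery developed above applies:
  \begin{itemize}
  \item Lemma~\ref{lem:cutset} describes every $(s,t)$-valid partition.
  \item Splitting off the degree-$2$ vertices of $H$ produces the weighted graph $K$, which was shown to be simple and $3$-connected.
  \item The final lemma then finds, inside $H$, a subdivision of $K_4$ whose four triangles all become odd cycles. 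By Lemma~\ref{lem:K4} this subdivision is not strongly EFX-orientable, so $H$ is not strongly EFX-orientable.
  \end{itemize}
  This contradicts the first bullet.
\end{itemize}
In every case we reach a contradiction, so $\tau(G) \leq 1$.

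The main point needing care is the bookkeeping in the minimal-counterexample setup. The minimization that defines $H$ must be over exactly the class used in Lemma~\ref{lem:cutset}: nonbipartite, $2$-connected, $\tau \geq 2$. That lemma's reduction replaces $B$ by a path $P$ and then appeals to the minimality of $H$. For this to be legitimate, the new graph $A \cup P$ must be a subgraph of $G$, which it is, since it is a subgraph of $H$. It must also lie in the minimizing class, which the proof of Lemma~\ref{lem:cutset} verified.

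There is also a degenerate possibility: $H$ might be $3$-connected, so that no separating pair $\{s,t\}$ exists. This is exactly why the $3$-connected lemma is invoked for $H$, and not only for $G$.

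With these checks in place, the theorem is just the combination of the $3$-connected lemma, the Monotonicity Proposition, and the final lemma that $H$ is not strongly EFX-orientable.
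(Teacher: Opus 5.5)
Your proposal is correct and follows the paper's argument essentially step for step. It takes a minimal-edge counterexample $H$ among the nonbipartite, $2$-connected subgraphs of $G$ with $\tau(H)\geq 2$, applies Lemma~\ref{lem:cutset}, splits off degree-$2$ vertices to get the simple $3$-connected graph $K$, finds the odd-$K_4$-subdivision obstruction of Lemma~\ref{lem:K4}, and reaches a contradiction with the Monotonicity Proposition. Your explicit check that $H$ itself cannot be $3$-connected, so that a separating pair $\{s,t\}$ exists, fills in a step the paper leaves implicit when it moves from ``$G$ is not $3$-connected'' to ``we can find $s,t$ that disconnect $H$.''
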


This also gives us an alternative way of showing $\chi(G) \leq 3$ for any strongly EFX orientable graph $G$ as well. Any 2-connected strongly EFX orientable graph is 3-colorable as $\tau(G) \leq 1$. If $G$ is not 2-connected, then we take its block decomposition, and 3-color each 2-connected block starting from the root, and make our way down to the leaf blocks. We now handle the case where $G$ is 2-connected and has $\tau(G) \leq 1$. Our goal is to prove the following theorem. 

\begin{theorem}
    If $G$ is nonbipartite and 2-connected, $G$ is strongly EFX-orientable if and only if there exists some $v \in G$ with $d(v)=2$ such that $G-v$ is bipartite. 
\end{theorem}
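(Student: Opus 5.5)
The \emph{if} direction is a modification of the bipartite algorithm of Lemma \ref{lem:bipartite}, and the \emph{only if} direction is a minimal-counterexample argument over ear decompositions, reducing to the forbidden structures of Lemma \ref{cor:fs1}, Lemma \ref{lem:stpath} and Lemma \ref{lem:K4}.

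\textbf{The ``if'' direction.} Let $v$ have neighbours $u,w$, and let $G-v$ be bipartite with parts $A,B$. If $u,w$ lay in the same part, $v$ could be added to the other part and $G$ would be bipartite. So we may assume $u\in A$ and $w\in B$. Treat $v$ as an extra $A$-type agent. Every vertex of $A$ takes its favourite incident edge. Then $v$ takes its favourite of $uv,vw$, except when $u$ has already claimed $uv$, in which case $v$ takes $vw$. Every remaining edge goes to its endpoint in $B$. Now $A$-type vertices, $v$ included, hold at most one edge, so any envy towards them is harmless. Since $B$ is independent in $G$, no $B$-vertex holds an edge valued by another $B$-vertex. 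It remains to check $A$-type agents.
\begin{itemize}
\item A vertex of $A$ can only envy a holder of one of its incident edges. That holder is either a $B$-vertex, which by simplicity holds at most one such edge, or $v$ holding $uv$ (only when $u$ picked elsewhere). In both cases the favourite-edge choice rules out envy.
\item If $v$ was forced onto $vw$, the only agent holding a $v$-edge is $u$, which holds the single edge $uv$. So $v$'s envy of $u$ is EFX-harmless, and $w$ holds nothing $v$ values.
\end{itemize}

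\textbf{The ``only if'' direction.} Suppose $G$ is nonbipartite, $2$-connected and strongly EFX-orientable, but has no degree-$2$ vertex whose deletion makes it bipartite. By the previous theorem $\tau(G)=1$. By monotonicity, every subgraph of $G$ is also strongly EFX-orientable. Choose a $2$-connected nonbipartite subgraph $H\subseteq G$ with no such ``good'' vertex and with as few edges as possible. Take an ear decomposition of $H$ whose last ear is a path $P$ with endpoints $x,y$, so that $H'=H-\mathrm{int}(P)$ is $2$-connected. By minimality, $H'$ is either bipartite or has a good vertex.

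\textbf{Case 1: $H'$ is bipartite.} Then $P$ has the wrong parity with respect to the bipartition of $H'$. If $P$ has an interior vertex, that vertex has degree $2$ in $H$ and hits every odd cycle, so it is good, a contradiction. Hence $P$ is a single edge $xy$ joining two vertices on the same side of $H'$. If $|V(H')|\ge 4$, Lemma \ref{lem:suf-tight} shows $H$ is not strongly EFX-orientable. The small cases are checked directly.

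\textbf{Case 2: $H'$ has a good vertex $z$.} Let $z$ have neighbours $z_1,z_2$ in $H'$. The odd cycles of $H'$ all pass through the path $z_1zz_2$. Since $z$ is not good in $H$, either $z$ gains degree, meaning $z\in\{x,y\}$, or $H-z$ contains an odd cycle using $P$. In each sub-case I will exhibit one of the following:
\begin{itemize}
\item two odd cycles sharing exactly one edge or one vertex (Lemma \ref{cor:fs1});
\item two vertices $s,t$ joined by four internally disjoint paths of parities odd, odd, even, even (Lemma \ref{lem:stpath});
\item a $K_4$-subdivision all of whose triangles are odd (Lemma \ref{lem:K4}).
\end{itemize}

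The method is to take an odd cycle $C$ of $H'$ through $z$. Using $2$-connectivity, route the ear $P$ together with a $C$-avoiding connection through the bipartite graph $H'-z$. As in the $K_4$ argument above, the parities of paths between vertices of $H'-z$ are then determined by the colour classes. When both endpoints of $P$ land on $C$, this yields a theta graph. Checking parities then gives either two odd cycles sharing a single edge or the four-path configuration. When exactly one endpoint lands off $C$, a branch vertex in $H'-z$ completes a $K_4$-subdivision, and the colour-class bookkeeping shows all its triangles are odd.

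\textbf{Main obstacle.} I expect the case analysis in Case 2 to be the main difficulty. The parity bookkeeping must cover every position of $x,y$ relative to $z$, $C$ and the bipartition of $H'-z$, including degenerate cases where the branch vertex coincides with an endpoint. If needed, I would first reduce to $z\in\{x,y\}$ versus $z\notin\{x,y\}$, and in the latter case choose $C$ to minimise its intersection with $P$.
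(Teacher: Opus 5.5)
\textbf{The gap is in Case~2.} The certificates you promise there need not exist.

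First, a smaller problem with that case. You already have $\tau(H)=1$, so Lemma~\ref{lem:K4} can never apply. A vertex of a $K_4$-subdivision lies on at most three of its four triangle cycles, so an all-odd one has $\tau\ge 2$. Your ``one endpoint off $C$'' sub-case therefore cannot end as you describe.

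More seriously, there are minimal counterexamples that contain none of your certificates. Let $R$ have vertices $v,x_1,x_2,a_1,b_1,a_2,b_2$ and edges $va_1,a_1x_1,vb_1,b_1x_1,x_1a_2,a_2x_2,x_1b_2,b_2x_2,x_2v$. This is two $4$-cycles and an edge glued cyclically at $v,x_1,x_2$.
\begin{enumerate}
\item Every odd cycle is $v\,(a_1\text{ or }b_1)\,x_1\,(a_2\text{ or }b_2)\,x_2\,v$.
\item Deleting any degree-$2$ vertex leaves an odd $5$-cycle through its twin, so $R$ has no good vertex.
\item Every proper $2$-connected nonbipartite subgraph omits some $a_i$ or $b_i$, and so has a good vertex.
\end{enumerate}
Thus $R$ is a legitimate edge-minimal $H$. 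Its only $2$-connected bipartite subgraphs are the two chordless $4$-cycles, so every open ear decomposition puts you in Case~2.

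Yet $R$ contains none of your certificates:
\begin{enumerate}
\item Only $v,x_1,x_2$ have degree $\ge 3$, so there is no $K_4$-subdivision.
\item Only $x_1$ has degree $\ge 4$, so no two vertices are joined by four internally disjoint paths.
\item All odd cycles contain $v,x_1,x_2$, so no two meet in exactly one vertex or one edge, and none are disjoint.
\item Lemma~\ref{lem:suf-tight} does not apply either, since the only $2$-connected bipartite subgraphs are those chordless $4$-cycles.
\end{enumerate}
Concretely, take the last ear $P=vb_1x_1$ and $z=a_1$. Your theta graph $C\cup P$ has path lengths $2,2,3$, and its odd cycles share a $3$-edge path. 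It even has a good vertex ($a_2$), so it is strongly EFX-orientable.

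\textbf{What is missing} is the paper's ``ring'' obstruction, which is a valuation rather than a forbidden subgraph.
\begin{enumerate}
\item The paper splits a transversal vertex $v$ into $s,t$ by colour class. If the result is $2$-connected, you get two odd cycles meeting only at $v$.
\item Otherwise, the blocks of the split graph form a chain $s=x_0,x_1,\dots,x_k=t$, and all its cut vertices lie in $T(H)$.
\item The paper takes a cycle through $x_{i-1},x_i$ in each block and assigns symmetric additive weights in $\{0,1,2\}$. Even blocks get paths weighted $(0,2)$ and $(2,0)$. Odd blocks get a $0$-edge and a $(2,0,2)$-path. Bridge blocks get weight $1$.
\item Each block forces the ``happy'' states of $x_{i-1}$ and $x_i$ to be equal or opposite according to its parity. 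Going around the ring gives a parity contradiction.
\item The result lifts back to $H$ by double subdivision of zero-valued edges.
\end{enumerate}
For $R$ this valuation is $a_1x_1=vb_1=a_2x_2=x_1b_2=2$, $x_2v=1$, and $0$ elsewhere.

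\textbf{A smaller issue in the ``if'' direction.} Your algorithm leaves $uv$ with no endpoint in $B$ when $v$ prefers $vw$ and $u$ picks elsewhere. Name the neighbours so that $f_v(uv)\ge f_v(vw)$; this costs nothing, since $A$ and $B$ are interchangeable. With that fix your argument works, and it is essentially the construction in the first case of the paper's final lemma. The paper itself proves this direction by citing Lemma~\ref{lem:suf2}.
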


To prove necessity, we just need to eliminate the case where $\tau(G) =1$ but all transversal vertices have degree at least 3. Sufficiency follows from the following lemma. 

\begin{lemma}[\cite{ZM25}]\label{lem:suf2}
    If $G$ is a graph with some $v \in G$ such that for any edge $e$ incident to $v$, $G-e$ is bipartite, then $G$ is strongly EFX-orientable. 
\end{lemma}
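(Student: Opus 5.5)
If $G$ is bipartite we are done by Lemma \ref{lem:bipartite}, so assume $G$ is nonbipartite. First I would extract the structure forced by the hypothesis. Every odd cycle of $G$ must contain every edge incident to $v$, since deleting any such edge destroys all odd cycles. A cycle uses exactly two edges at $v$, so $d(v)=2$. Call the neighbours $x,y$. Then $G-v$ is bipartite, and since some odd cycle passes through $x v y$, every $x$--$y$ path in $G-v$ is odd. Hence in the bipartition $(A,B)$ of $G-v$ we may take $x\in A$ and $y\in B$. Equivalently, $G-vx$ is bipartite with sides $A\cup\{v\}$ and $B$, and $G-vy$ is bipartite with sides $A$ and $B\cup\{v\}$.

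The plan is to run the favourite-edge algorithm from the proof of Lemma \ref{lem:bipartite} on one of these two bipartite graphs and then place the one leftover edge at $v$ by hand. By symmetry I would assume $f_v(\{vy\})\ge f_v(\{vx\})$ and work with $G' = G-vx$, with picking side $A\cup\{v\}$. Each vertex of $A\cup\{v\}$ picks its favourite edge of $G'$; in particular $v$ picks $vy$. All other edges of $G'$ go to their endpoint in $B$. On $G'$ this is EFX by the argument of Lemma \ref{lem:bipartite}. It then remains to orient $vx$ so that the orientation stays EFX on $G$. For this I would refine the picking step: vertices of $A$ and $v$ choose favourites among all their incident edges of $G$, including $vx$.

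The case analysis on $vx$ goes as follows. If $x$'s favourite edge is $vx$, give $vx$ to $x$. Then $v$ holds only $vy$, which it weakly prefers to $vx$, so $v$ does not envy $x$. Every vertex still holds at most one edge valuable to any given neighbour, so the bipartite argument goes through. If $x$'s favourite edge is something else, the two natural choices are as follows. Giving $vx$ to $v$ is harmless for $x$, since $f_x(\{vx\})$ is at most $x$'s favourite. However, it makes $v$'s bundle $\{vx,vy\}$, and $y\in B$ might then envy the strict subset $\{vy\}$. Giving $vx$ to $x$ instead may make $x$'s bundle have size $2$, creating an EFX violation for some $B$-neighbour of $x$. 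To repair whichever violation occurs, I would compare $f_y(X_y)$ with $f_y(\{vy\})$. If $y$ envies $\{vy\}$, I would switch to the mirrored construction on $G-vy$ (picking side $A$, with $v$ now on the receiving side $B\cup\{v\}$). There $y$ is a receiver that still cannot lose $vy$ to $v$ by picking. I would also check that the switch does not reintroduce envy at $x$, using that $v$ values only the two edges $vx,vy$.

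I expect this last step to be the main obstacle: it requires a careful case split on the valuations $f_v$, $f_x$, $f_y$ of the two edges at $v$, choosing between the $G-vx$ and $G-vy$ frames so that at most one vertex ends up with a two-edge bundle. Moreover, each of the two goods in that bundle must be worthless, or at least not envied, by everyone who could envy the other good. The point to verify in each case is that $v$ is the only vertex whose role (picker or receiver) is ambiguous, and that $d(v)=2$ limits every possible EFX violation to the three vertices $v$, $x$, $y$.
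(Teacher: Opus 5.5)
\textbf{There is a genuine gap in the orientation step.} Your structural reduction is correct: $d(v)=2$, and with $N(v)=\{x,y\}$ the bipartition $(A,B)$ of $G-v$ has $x\in A$, $y\in B$. But you leave the decisive case open, and the repair you sketch does not work.

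In both of your frames the picking side is $A$ (or $A\cup\{v\}$). So the leftover edge is either $v$'s \emph{less} preferred edge $vx$ inside the picking side, or $vy$ inside the receiving side. Neither can be placed safely in general. Take $G$ with edges $vx,vy,xy,yz,zu$, so $A=\{x,z\}$, $B=\{y,u\}$, and both $G-vx$ and $G-vy$ are trees. Use additive values $f_v(vx)=0$, $f_v(vy)=1$, $f_x(vx)=0$, $f_x(xy)=1$, $f_y(vy)=f_y(xy)=1$, $f_y(yz)=0$, $f_z(yz)=0$, $f_z(zu)=1$.

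In the $G-vx$ frame, $v$ takes $vy$, $x$ takes $xy$, $z$ takes $zu$, and $y$ receives only $yz$. Giving $vx$ to $v$ makes $y$ envy $X_v\setminus\{vx\}=\{vy\}$. Giving it to $x$ makes $y$ envy $X_x\setminus\{vx\}=\{xy\}$.

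In your mirrored $G-vy$ frame, $v$ receives $vx$ and $y$ receives $yz$. Giving $vy$ to $v$ again makes $y$ envy $\{vy\}$. Giving it to $y$ makes $v$ envy $X_y\setminus\{yz\}=\{vy\}$. So switching between these two frames cannot close the argument.

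\textbf{The missing idea} is to make $v$'s \emph{favorite} edge the unique edge inside the \emph{picking} side, and give it to $v$. With $f_v(vy)\ge f_v(vx)$, use the bipartition $(A,B\cup\{v\})$ of $G-vy$, but let $B\cup\{v\}$ pick:
\begin{itemize}
\item $v$ takes $vy$;
\item each vertex of $B$ takes its favorite edge of $G-vy$;
\item every other edge goes to its endpoint in $A$.
\end{itemize}
Then $A$ is independent in $G$, so no vertex of $A$ is envied. A vertex $b\in B$ values at most one edge of an $A$-bundle and picked a weakly better one. The vertex $v$ values only $vx$ there, and $f_v(vx)\le f_v(vy)$. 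Meanwhile every vertex of $B\cup\{v\}$ holds at most one edge. EFX follows as in Lemma~\ref{lem:bipartite}, with no case analysis.

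This is exactly the first case of the proof of the paper's final sufficiency lemma, with $x$ and $y$ interchanged. On the example above it gives $v\mapsto vy$, $y\mapsto xy$, $u\mapsto zu$, $x\mapsto vx$, $z\mapsto yz$.

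Separately, your step ``so $d(v)=2$'' needs $v$ to have at least one incident edge. For isolated $v$ the hypothesis is vacuous and the statement as written fails, e.g.\ for $K_4$ plus an isolated vertex.
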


Then on a 2-connected graph with $\tau(G)=1$ with a transversal vertex $v$ with degree 2, since all odd cycles must pass through $v$, they must pass through both edges of $v$, so removing any edge of $v$ changes $G$ into a bipartite graph. Hence this gives us sufficiency, so we prove necessity. 

Take a 2-connected graph $H$ with $\tau(H)=1$, and let $T(H)$ be the set of all vertices such that for all $u \in T(H)$, $H-u$ is bipartite. Assume that all $v \in T(H)$ have degree at least 3. We wish to show an obstruction that prevents an EFX orientation. Astra was prompted to handle this with suggestions by splitting it into cases where $|T(H)| = 1$, $|T(H)| > 1$, and to possibly utilize the block decomposition of $H-v$. Take any vertex $v \in T(H)$, and delete it from $H$, noting that we now have a bipartition $H-v=(A,B)$. Add in vertices $s,t$, where $s$ has its neighborhood as $N(v) \cap B$ and $t$ has its neighborhood as $N(v) \cap A$. Note that this is essentially "cutting" $v$ into two vertices. Refer to this new graph as $H'$. We first eliminate the case where $|T(H)|=1$. 

\begin{lemma}
    If $|T(H)| = 1$, $H'$ is 2-connected. 
\end{lemma}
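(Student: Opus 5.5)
The idea is to show that any cut vertex of $H'$ would also lie on every odd cycle of $H$. That would make it a second element of $T(H)$, contradicting $|T(H)|=1$.

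\textbf{Basic facts.} First I record what we know about $H'$.
\begin{itemize}
\item $H-v$ is connected, since $H$ is 2-connected.
\item Both $N(v)\cap A$ and $N(v)\cap B$ are nonempty. Otherwise $v$ could be placed in one part of the bipartition of $H-v$, and $H$ would be bipartite.
\item Consequently $s$ and $t$ each have a neighbor in $H-v$, so $H'$ is connected and has at least $3$ vertices.
\end{itemize}

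\textbf{Structure of odd cycles.} Every odd cycle $C$ of $H$ passes through $v$, because $H-v$ is bipartite. Write $C = v\,u\,P\,w\,v$, where $P$ is a $u$--$w$ path in $H-v$ and $u,w\in N(v)$. If $u$ and $w$ lay in the same part of $(A,B)$, then $P$ would have even length and $C$ would be even. Hence every odd cycle consists of $v$ together with a path in $H-v$ joining a vertex of $N(v)\cap A$ to a vertex of $N(v)\cap B$. The converse also holds, although we will not need it.

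\textbf{No cut vertex.} Suppose $H'-x$ is disconnected for some vertex $x$.
\begin{itemize}
\item \emph{$x\in\{s,t\}$.} Say $x=s$. Then $H'-s$ is $H-v$ with $t$ attached to at least one vertex, which is connected. The case $x=t$ is symmetric. So $x\in V(H)\setminus\{v\}$.
\item \emph{Every component meets $N(v)$.} $H-x$ is connected, by 2-connectivity of $H$. So every component of $H-v-x$ contains a neighbor of $v$, and in $H'-x$ each such component is joined to $s$ or to $t$.
\item \emph{$s$ or $t$ isolated.} Suppose, say, $N(v)\cap B=\{x\}$, so that $s$ is isolated in $H'-x$. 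Then every odd cycle uses the edge $vx$, hence contains $x$. So $H-x$ is bipartite and $x\in T(H)$ with $x\ne v$, contradicting $|T(H)|=1$. The case $N(v)\cap A=\{x\}$ is symmetric.
\item \emph{Otherwise.} Now $s$ and $t$ both survive in $H'-x$ and each has a neighbor there. Since every component of $H-v-x$ is attached to $s$ or $t$, the only way $H'-x$ can be disconnected is that $s$ and $t$ lie in different parts. That means no component of $H-v-x$ contains both a vertex of $N(v)\cap A$ and a vertex of $N(v)\cap B$. So every path in $H-v$ from $N(v)\cap A$ to $N(v)\cap B$ must pass through $x$. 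By the structure of odd cycles above, every odd cycle of $H$ then contains $x$. Again $x\in T(H)\setminus\{v\}$, a contradiction.
\end{itemize}
Therefore $H'$ has no cut vertex and is 2-connected.

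The main point needing care is the case analysis of how $H'-x$ can disconnect. One must rule out the possibility of $s$ and $t$ lying in the same part, which is handled by every component of $H-v-x$ touching $N(v)$. The isolated-$s$ (or isolated-$t$) situation must also be handled separately. Each case reduces to the single observation that separating $N(v)\cap A$ from $N(v)\cap B$ in $H-v$ yields another transversal vertex.
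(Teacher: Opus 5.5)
Your proof is correct and is essentially the paper's argument run in contrapositive form. The paper uses an odd cycle avoiding the deleted vertex $u$ (guaranteed by $|T(H)|=1$) to get an $s$--$t$ path in $H'-u$, and then lifts paths from $H-u$ by replacing $v$ with $s$, $t$, or that path; these are exactly your two ingredients, namely that every component of $H-v-x$ meets $N(v)$, and that separating $s$ from $t$ forces $x$ onto every odd cycle (your separate treatment of an isolated $s$ or $t$ is slightly more explicit than the paper's brief remark, though your general case already covers it).
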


\begin{proof}
    We claim that $H'$ is $2-$connected. Indeed, note that if either $s$ or $t$ is deleted, then $H'-\{s,t\}$ will still be connected by virtue of $H-v$ being connected, and the other vertex in $\{s,t\}$ is connected to the rest of the graph by its edges. So, assume some vertex $u \in H'-\{s,t\}$ is deleted. For any $x,y \in H'-u-\{s,t\}$, we can take a path from $x$ to $y$ in $H-u$. If $v$ does not appear on the path, then $x,y$ are connected in $H'-u$. Otherwise, we note that $s,t$ are still connected in $H'-u$. Indeed, since $T(H)=1$, there must exist some odd cycle that does not go through $u$, then this cycle becomes an $s,t$ path on $H'$ which does not traverse through $u$. Then, for any instance of $v$ in an $x,y$ path, we can replace $v$ with $s$, $t$, or an $s,t$-path, and one of these three options constructs a valid $x,y$-path on $H'-u$. Since $s$ and $t$ are connected in $H'-u$ and at least 3 vertices in $H'$ are adjacent to one of them, $s$ and $t$ still remain connected to the rest of the graph after the deletion of $u$, hence we have 2-connectivity of $H'$. 
\end{proof}

\begin{lemma}
    If $H'$ is 2-connected for some choice of $v$, then $H$ is not strongly EFX orientable.
\end{lemma}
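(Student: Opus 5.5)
The approach is to turn two internally disjoint $s,t$-paths in $H'$ into two odd cycles of $H$ that meet only at $v$, and then invoke Lemma \ref{cor:fs1} together with the Monotonicity Proposition. I do not expect to need any new star-forest obstruction here.

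\textbf{Step 1 (every $s,t$-path in $H'$ is odd).} With respect to the bipartition $(A,B)$ of $H-v$, the vertex $s$ has all its neighbours in $B$, so $s$ can be added to side $A$. Likewise $t$ has all its neighbours in $A$, so $t$ can be added to side $B$. Since $s$ and $t$ are not adjacent in $H'$, this gives a proper bipartition of $H'$ with $s$ and $t$ on opposite sides. Hence every $s,t$-path in $H'$ has odd length.

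\textbf{Step 2 (from paths to odd cycles).} Map an $s,t$-path $P$ in $H'$ back to $H$ by identifying $s$ and $t$ with $v$. Because $s$ and $t$ have no common neighbour, $P$ has at least two interior vertices, so $P$ becomes a cycle of $H$ through $v$ of the same odd length. If two $s,t$-paths $P_1,P_2$ in $H'$ are internally disjoint, their images are odd cycles of $H$ whose only common vertex is $v$.

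\textbf{Step 3 (Menger).} The graph $H'$ has at least three vertices and is $2$-connected by hypothesis. By Menger's Theorem it therefore contains two internally disjoint $s,t$-paths $P_1,P_2$. By Step 2 these give two odd cycles of $H$ sharing exactly one vertex, namely $v$. By Lemma \ref{cor:fs1} this subgraph is not strongly EFX orientable, so by the Monotonicity Proposition neither is $H$.

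\textbf{Main point to check.} The only delicate step is making sure the two lifted cycles share exactly one vertex and no edge, as Lemma \ref{cor:fs1} requires. Internal disjointness of $P_1$ and $P_2$ rules out any common vertex other than $v$. The paths also share no edge at $v$: an edge $sx$ and an edge $ty$ lift to $vx$ and $vy$ with $x\in B$ and $y\in A$, so $x\neq y$, while two edges at the same endpoint $s$ or $t$ lie on different $P_i$ and have distinct far ends by internal disjointness.

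The hypothesis $d(v)\ge 3$ is not used directly here. It is implicitly subsumed by the assumption that $H'$ is $2$-connected, which forces $d(s),d(t)\ge 2$.
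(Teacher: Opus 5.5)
Your proposal is correct and follows the paper's proof. Menger gives two internally disjoint $s,t$-paths in the bipartite graph $H'$; these are odd because $s$ and $t$ lie on opposite sides, and they lift to two odd cycles of $H$ meeting only at $v$, which Lemma~\ref{cor:fs1} and monotonicity rule out. Your extra checks (that $s,t$ have no common neighbour, so the lifts are genuine cycles, and that the two cycles share no edge) fill in details the paper leaves implicit.
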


\begin{proof}
    We can find 2 internally disjoint paths between $s,t$ on $H'$, and since $H'$ is bipartite and $s,t$ belong to different partite sets, then these 2 paths must have an odd length. Then on $H$ they correspond to two odd cycles that intersect exactly at $v$, but this is a forbidden substructure, and hence $H$ cannot be strongly EFX orientable, as desired. 
\end{proof}

This fully resolves the case where $|T(H)|=1$. We can now tackle the case where $|T(H)| > 1$ and we can assume $H'$ is not 2-connected for any choice of $v$. 

\begin{lemma}
    The blocks in the block decomposition of $H'$ form a path in the block cut tree with $s,t$ on opposite ends. Moreover, all cut vertices of $H'$ appear in $T(H)$. 
\end{lemma}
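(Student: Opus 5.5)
The plan is to analyze the leaf blocks of $H'$ using the $2$-connectivity of $H$, and then to translate $s,t$-paths in $H'$ into odd cycles of $H$ through $v$. Throughout, $H'$ is assumed to be not $2$-connected, as in the case we are now treating.

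\textbf{Step 1: $H'$ is connected and $s,t$ are not cut vertices.} Since $H$ is $2$-connected, $H-v$ is connected. Both $s$ and $t$ have at least one neighbor in $H-v$, since otherwise all odd cycles through $v$ would be impossible and $H$ would be bipartite. Hence $H'$ is connected. Moreover, $H'-s$ is $(H-v)$ with $t$ attached, which is connected, and symmetrically for $t$. So neither $s$ nor $t$ is a cut vertex of $H'$.

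\textbf{Step 2: every leaf block of $H'$ contains $s$ or $t$ as a non-cut vertex.} Suppose a leaf block $L$ with cut vertex $c$ contains neither $s$ nor $t$. Then no vertex of $L-c$ is adjacent to $s$ or $t$ in $H'$. In $H$ this means no vertex of $L-c$ is adjacent to $v$, because the edges of $v$ are exactly the edges of $s$ and $t$. Also, $H-v$ and $H'-\{s,t\}$ are the same graph, so the only connection from $L-c$ to the rest of $H$ goes through $c$. Hence $c$ is a cut vertex of $H$ (the rest of $H$ is nonempty, as it contains $v$), contradicting $2$-connectivity.

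Consequently $H'$ has at most two leaf blocks, and since it is not $2$-connected it has at least two. A tree with exactly two leaves is a path, so the block cut tree is a path. Its two end blocks are the only leaf blocks, and each contains one of $s,t$ as a non-cut vertex. They cannot both lie in the same end block, since then the other end block would contain neither. So $s$ and $t$ sit at opposite ends, which gives the first claim.

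\textbf{Step 3: cut vertices lie in $T(H)$.} Let $u$ be a cut vertex of $H'$. It is neither $s$ nor $t$ by Step 1, so $u$ is a vertex of $H-v$. Because the block cut tree is a path with $s$ and $t$ in the two end blocks, $u$ separates $s$ from $t$: every $s,t$-path in $H'$ passes through $u$.

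Now take any odd cycle $C$ of $H$. Since $v\in T(H)$, $C$ passes through $v$; let $x,y$ be the neighbors of $v$ on $C$. The $x,y$-path $C-v$ has odd length, so $x$ and $y$ lie in different parts of the bipartition $(A,B)$. Therefore one of $x,y$ is adjacent to $s$ and the other to $t$ in $H'$, and $C-v$ together with these two edges is an $s,t$-path in $H'$. This path must contain $u$, so $C$ contains $u$. Every odd cycle of $H$ thus passes through $u$, so $H-u$ is bipartite and $u\in T(H)$.

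The main point requiring care is Step 2: one must check that the adjacencies of $v$ in $H$ correspond exactly to those of $s$ and $t$ in $H'$, so that a leaf block avoiding $s,t$ really is separated from the rest of $H$ by its single cut vertex.
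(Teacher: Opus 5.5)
Your proof is correct and follows essentially the same route as the paper. The $2$-connectivity of $H$ rules out any block hanging off the $s$--$t$ path in the block cut tree; you phrase this via leaf blocks, which is slightly cleaner. Every odd cycle of $H$ passes through $v$ and becomes an $s,t$-path in $H'$, so every separating vertex lies on all odd cycles and hence in $T(H)$, where your explicit parity check (the two neighbours of $v$ on an odd cycle lie in opposite parts and so attach to $s$ and $t$ respectively) fills in a detail the paper leaves implicit.
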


\begin{proof}
    We first note that neither $s,t$ are cut vertices in $H'$ by virtue of $H-v$ being connected. Take the block cut tree of $H'$, and note that there is exactly one path from the block containing $s$ to the block containing $t$. Suppose for contradiction, there is another vertex on the tree not on the path. Then, there is a vertex $b$ on the tree that corresponds to a block in $H'$ not on the path between $s$ and $t$. This means that the deletion of some vertex $u$ in the block corresponding to $b$ isolates the other vertices from both $s$ and $t$ in $H'$, which means deleting $u$ isolates vertices from $v$ in $H$, contradicting the fact that $H$ is 2-connected. So, the block decomposition of $H'$ is a path with the blocks containing $s,t$ on opposite ends. If a vertex $u$ separates $s,t$ but is not in $T(H)$, then we can find an odd cycle in $H$ through $v$ that does not include $u$. Then this translates to a $s,t$ path in $H'$ that does not traverse through $u$, so $u$ cannot be a cut vertex. 
\end{proof}

\begin{figure}[h]
\centering
\begin{subfigure}[c]{.5\textwidth}
  \centering
  \includegraphics[width=.85\textwidth]{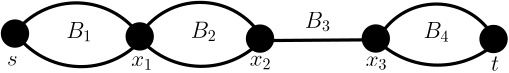}
  \label{}
\end{subfigure}%
\begin{subfigure}[c]{.5\textwidth}
  \centering
  \includegraphics[width=.55\textwidth]{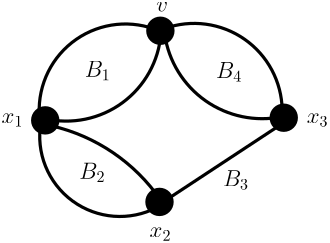}
  \label{}
\end{subfigure}
\caption{An example of what the structure of $H'$ and $H$ could look like for $k=4$.}
\label{}
\end{figure}

Let $k$ be the number of blocks in the block decomposition of $H'$. Let $x_1,x_2...x_{k-1}$ be the cut vertices along the path from $s$ to $t$ in the block cut tree, and set $x_0=s$ and $x_k=t$. We can get 2 collections of paths $P_1,P_2,...P_k$ and $Q_1,Q_2...Q_k$ with this construction: for $x_i,x_{i+1}$ if the block $x_i,x_{i+1}$ is a single edge then $P_{i+1}=Q_{i+1} = x_ix_{i+1}$. Otherwise, if the block containing $x_i,x_{i+1}$ is not a single edge, take any cycle through $x_i,x_{i+1}$, and split it into two internally disjoint paths $P_{i+1},Q_{i+1}$. We note that for each $i \in [k]$, the parity of $P_i$ and $Q_i$ are equal as $H'$ is bipartite. We glue $s,t$ back together into $v$. This structure looks like a "ring" of cycles and edges glued together.
We will now prove that this structure is not strongly EFX orientable, to where we utilize the following lemma. 

\begin{lemma}[Double Subdivision of Zero Edges \cite{ZM25}]
    Suppose there exists an additive symmetric valuation on $G$ which $G$ has no EFX orientation for. Pick an edge  $xy$ of value $0$ in the valuation, and replace it with any path of odd length to obtain $G'$. Then $G'$ is not strongly EFX orientable. 
\end{lemma}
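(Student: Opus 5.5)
We would prove the lemma by induction on the length of the replacing path. A path of odd length $2j+1$ arises from the single edge $xy$ by doing $j$ double subdivisions in a row, each applied to an edge of value $0$. So it suffices to prove the following stronger one-step statement: if $G$ has an additive symmetric valuation with no EFX orientation, $xy$ has value $0$, and $G'=G-xy+xa+ab+by$ with $a,b$ new vertices, then $G'$ also carries an additive symmetric valuation with no EFX orientation in which $xa$ and $by$ have value $0$. Iterating this $j$ times, each time on one of the newly created zero edges, gives the claim.

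The valuation on $G'$ is the following. Copy the valuation of $G$ on all common edges, give $xa$ and $by$ value $0$ to both endpoints, and give $ab$ value $1$ to both $a$ and $b$. This is additive and symmetric. Every original vertex assigns the same value to every bundle as before, because the only changed edges incident to $x$ or $y$ are worth $0$.

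Suppose for contradiction that $O'$ is an EFX orientation of $G'$. By the symmetry of the construction under swapping $(x,a)\leftrightarrow(y,b)$, we may assume $ab$ is oriented to $a$. Then $b$ owns at most $by$, which is worth $0$ to $b$, while $b$ values $\{ab\}$ at $1$. Hence $b$ envies $a$. EFX then forces $a$'s bundle to be exactly $\{ab\}$: if $a$ also held $xa$, removing $xa$ would leave a strict subset that $b$ still envies. So $xa$ is oriented to $x$. We now define an orientation $O$ of $G$. Keep every original edge as in $O'$ and orient $xy$ toward $x$. Then $x$'s bundle in $O$ is its $O'$-bundle with the zero edge $xa$ exchanged for the zero edge $xy$. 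Every other original vertex, $y$ included, receives a bundle in $O$ that is a subset of its $O'$-bundle: $y$ loses $by$ if it held it and gains nothing.

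Finally, we check that $O$ is EFX, which gives the contradiction. All agent values of all bundles in $O$ coincide with those of the corresponding bundles in $O'$, since only zero-valued edges differ. Consider an agent $i$ and a bundle $X_j$ in $O$. If $j\neq x$, then every strict subset of $X_j$ is a strict subset of the $O'$-bundle of $j$, and it has the same value to $i$. For $j=x$, exchanging $xy$ for $xa$ gives a value-preserving bijection between the strict subsets of the two bundles. Either way the EFX inequality transfers from $O'$. The main (mild) obstacle is this bookkeeping step: one must check that the exchange $xa\mapsto xy$ preserves values for every agent. This holds because $xy$ and $xa$ are both worth $0$ to $x$, $xy$ is worth $0$ to $y$, and $xa$ is worth $0$ to $a$, while no other agent values either edge.
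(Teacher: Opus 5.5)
Your proof is correct. Giving $ab$ value $1$ and $xa,by$ value $0$ forces the envied endpoint of $ab$ to give up its zero edge: $xa$ goes to $x$, or in the mirrored case $by$ goes to $y$. By additivity, trading $xa$ for $xy$ in $x$'s bundle changes no agent's value of any bundle or sub-bundle, and every other bundle only shrinks, so EFX transfers back to $G$. Strengthening the claim so that $xa$ and $by$ remain zero edges is exactly what makes the iteration legitimate, including repeated use on several zero edges. Your ``symmetry'' is a symmetry of the argument rather than of $G$, but the case $ab\to b$, $by\to y$, $xy$ oriented toward $y$ goes through word for word.

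The paper does not prove this lemma; it imports it from \cite{ZM25}. The closest argument in the paper is the case $xy\notin F$ of Lemma~\ref{lem:doublesub}. It uses the same gadget ($ab$ is added as a new one-edge star), but it argues through the witness characterization of $0$-$1$ valuations. Whichever of $a,b$ is chosen as witness puts the other into $\bigcup_i N_{F'}(v_i)$, and then its neighbor $x$ or $y$ must stay out of that set. That is the witness-language version of your step ``$b$ envies $a$, so $xa$ goes to $x$.'' Your route works directly with orientations, needs no characterization theorem, and covers arbitrary additive symmetric valuations. That generality is what the later application requires: the valuation on $H''$ takes the values $0,1,2$, so it is not a star-forest obstruction, and Lemma~\ref{lem:doublesub} would not apply to it directly.
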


This lemma allows us to "contract" cycles into smaller cycles to make things easier to prove. For any $i \in [k]$, if $P_i,Q_i$ are both even, replace them with two paths of length 2 with weights $(0,2)$ and $(2,0)$. If $P_i,Q_i$ are both odd but not in a single-edge-block, replace the shorter path (breaking ties arbitrarily) with a path of length 1 with weight $0$, and the other with a path of length 3 with weights $(2,0,2)$. Finally, for any blocks that are single edges, assign a weight of $1$. We note that all weight-1 edges form a matching, because if not, this would denote a vertex in $T(H)$ with degree 2 in $H$. Denote this weighted graph by $H''$. 

\begin{lemma}
    $H''$ does not have an EFX orientation. 
\end{lemma}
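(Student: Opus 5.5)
The plan is to show that no orientation of $H''$ is EFX by local propagation around the ring $x_0=v,x_1,\dots,x_{k-1},x_k=v$. First I will record which vertices value which edges. Every internal (non-cut) vertex of a gadget has degree $2$ and values exactly one edge of weight $2$ and one edge of weight $0$. Every cut vertex $x_i$ (including $v$) sees, on each side, exactly one positive edge: either a weight-$2$ edge into the neighbouring gadget or the weight-$1$ edge of a single-edge block.

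The basic tool is the standard envy lemma for symmetric additive valuations. If $u$ envies $w$, then $w$'s bundle, after removing any single good, must be worth at most $f_u(X_u)$ to $u$. Consequently, if $w$ holds the edge $uw$ of positive value to $u$, then $w$ holds no other edge at all; zero-valued goods count here, since removing one of them still leaves $uw$ behind. I would apply this first to internal vertices. An internal vertex that does not receive its weight-$2$ edge has value $0$ and therefore envies the cut vertex that holds that edge. That cut vertex's bundle must then be exactly that single edge, so in particular it receives none of its zero edges and not its other positive edge. This turns each gadget into a small finite set of allowed local configurations.

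Next I will classify each gadget type by the ``state'' of its two endpoints $x_{i-1},x_i$. For each cut vertex, record whether it takes a positive edge on its left and/or on its right (and hence whether its bundle is forced to be a singleton). The cases are:
\begin{itemize}
\item the even gadget, with paths weighted $(0,2)$ and $(2,0)$;
\item the odd gadget, with a direct $0$-edge and a path weighted $(2,0,2)$;
\item the single weight-$1$ edge.
\end{itemize}
For each type I will work out which boundary states admit an EFX orientation of the gadget's internal edges. The zero edges are the crux. In the odd gadget the direct edge $x_{i-1}x_i$ must go to an endpoint whose bundle is not forced to be a singleton, and $ab$ must go to a non-envied side. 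In the even gadget the $0$-edges $x_{i-1}m$ and $m'x_i$ create analogous constraints.

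I expect the even gadget to transmit the state unchanged and the odd gadget and the weight-$1$ edge to ``flip'' it, in the appropriate encoding. The envy relation for the weight-$1$ edge is the same as in the proof of Lemma~\ref{lem:bipartite}: the endpoint not receiving it envies the holder unless it got a weight-$2$ edge.

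Finally, I will close the ring. The total number of odd-type gadgets plus single-edge blocks is odd, because $s$ and $t$ lie in opposite partite sets of the bipartite graph $H'$, so an $s$--$t$ path is odd, and even gadgets contribute even length. Going around the ring from $v$ back to $v$ therefore flips the state an odd number of times, which contradicts $v$ having a single consistent state. The fact that weight-$1$ edges form a matching is what keeps two flips at a cut vertex from interacting, and so guarantees that each cut vertex genuinely sits between gadgets of the described types.

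The main obstacle is the local case analysis: getting the right state encoding so that every gadget really acts as a deterministic ``preserve'' or ``flip'' map. In particular the odd gadget's direct $0$-edge might a priori allow both states. I would first try the encoding ``$x_i$ holds the positive edge toward its successor'', and check it against all orientations of each gadget. If that fails, I would fall back to a star-forest-obstruction style argument mimicking Lemma~\ref{lem:K4}, using the weight-$2$ edges as the stars.
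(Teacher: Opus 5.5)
\textbf{There is a genuine gap: the invariant that makes the propagation work is never established.} Your scaffolding is the same as the paper's. A cut vertex holding a weight-$2$ edge must hold nothing else. Each gadget should preserve or flip a binary state. The odd $s$--$t$ path $P_1+\dots+P_k$ then closes the ring with a parity contradiction. But the content of the lemma is the local case analysis, which you defer, and the encoding you propose to try does not work.

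Take an even gadget $x$--$a$--$y$, $x$--$b$--$y$ with $xb,ay$ of weight $2$, followed by another cycle gadget at $y$, and let $x$ hold $xb$. There are two locally EFX continuations:
\begin{itemize}
\item $y$ takes $ay$, and $a$ keeps only $xa$;
\item $y$ takes its weight-$2$ edge in the next gadget, and $a$ keeps $\{xa,ay\}$, which $y$ does not envy because its value is already $2$.
\end{itemize}
So ``$x_i$ holds the positive edge toward its successor'' is not determined by the same statement for $x_{i-1}$. The variant ``holds some positive edge'' also fails: across a weight-$1$ block, both endpoints end up holding a positive edge.

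The star-forest fallback is not viable as stated either. It produces $0$-$1$ valuations, whereas the gap between weights $1$ and $2$ is essential here. Take $H=K_4-ab$ on $\{v,x_1,a,b\}$. It meets all the hypotheses, with $k=2$ and $H''=H$ carrying weights $va=0$, $ax_1=2$, $vb=2$, $bx_1=0$, $vx_1=1$. Lowering both $2$'s to $1$ admits the EFX orientation $x_1\leftarrow\{ax_1\}$, $v\leftarrow\{vx_1,va\}$, $b\leftarrow\{vb,bx_1\}$. Also, the weight-$2$ edges $ax_1,vb$ admit the witnesses $x_1,v$, since $ab\notin E$, so they do not form an obstruction.

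\textbf{The missing idea is the paper's choice of state.} Call a cut vertex \emph{happy} if it receives a weight-$2$ edge, not merely a positive one. You already observed that a happy vertex holds a single edge. You also need the complementary fact: an unhappy cut vertex has value at most $1$, because weight-$1$ edges form a matching. So it envies the internal vertex holding its weight-$2$ edge, and that vertex must give away its zero edge. With this, each gadget behaves as follows.
\begin{itemize}
\item \emph{Even gadget.} If $x$ is unhappy, $by$ is pushed onto $y$, so $y$ holds a zero edge and cannot be happy; the converse holds symmetrically. An unhappy endpoint is also left holding a zero edge.
\item \emph{Odd gadget $x$--$y$, $x$--$a$--$b$--$y$.} If $x$ is unhappy, $ab$ is pushed onto $b$, so $y$ must hold $by$ and is happy. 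If $x$ is happy, the direct zero edge is pushed onto $y$, so $y$ is unhappy. Either way the unhappy endpoint holds the direct edge.
\item \emph{Weight-$1$ edge $xy$.} If $x$ is happy, $xy$ is pushed onto $y$, making $y$ unhappy. If both were unhappy, whichever endpoint holds $xy$ is envied by the other (whose value is $0$) while also holding a zero edge from its cycle gadget.
\end{itemize}
The last case is where the matching property actually enters: it guarantees that the other block at each endpoint of a weight-$1$ edge is a cycle gadget. With this state, your parity argument goes through unchanged.
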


\begin{proof}
    Suppose for contradiction we can find an EFX orientation on $H''$. Call a vertex in $H''$ \emph{happy} if it receives an edge of weight 2 and \emph{unhappy} otherwise. 

    Let us examine a cycle made by even paths, $x-a-y$ and $x-b-y$, where $xb$ and $ay$ have a weight of $2$. Suppose $x$ is unhappy, then by definition $xb$ points to $b$. This forces $x$ to envy $b$, so $by$ points to $y$. Then if $y$ is happy, then one of its neighbors in $H''$ will envy it, but also $y$ received an edge of 0 value, preventing EFX. So, if $x$ is unhappy so is $y$. By a symmetric argument $y$ being unhappy also forces $x$ to be unhappy, so they are both happy at the same time or unhappy at the same time. We note that when both $x,y$ are unhappy, they envy $b,a$ respectively, and hence must also receive the 0 value edges $xa$ and $yb$. 

    Examine a cycle made by odd paths, $x-y$ and $x-a-b-y$. If $x$ is unhappy, then $xa$ points to $a$ by definition, which means $x$ envies $a$. Then $ab$ points towards $b$. If $y$ is unhappy then $y$ would envy $b$, which violates EFX. So $y$ must be happy, forcing $xy$ to point to $x$. If $x$ is happy, then $xy$ points to $y$ as some vertex envies $x$. Then $y$ cannot be happy because some other vertex would envy $y$ if it was. So $x$ and $y$ have opposite states, with the unhappy vertex also being forced to receive an edge of 0 value. 
    
    Finally, let us examine a single edge $x-y$. Note the other block containing $x$ and the other block containing $y$ are both cycles, as no single-edge-blocks share vertices. Suppose $x$ is happy, then some other vertex in $H''$ must envy $x$. This forces $xy$ to point to $y$, preventing $y$ from being happy. Now suppose that $x$ is unhappy. Assume that $y$ is unhappy, and orient $xy$ towards $x$. Then $y$ would envy $x$, because $y$ can only be incident to one edge of weight 1 and can't receive an edge of weight 2. But $x$ is unhappy and part of a cycle, and it was previously established that such unhappy $x$ would be forced to receive an edge of weight 0 in the cycle. Hence this violates EFX, so $y$ is happy. 

    Now, take the vertices $v,x_1,x_2...x_{k-1}$, assign a label of 0 if it is unhappy and assign a label of 1 if it is happy. Between $x_{i-1},x_i$, if $P_i$ is odd then they swap labels and if $P_i$ is even they keep the same label. Then, an even number of swaps must happen between $v,x_1,x_2...x_{k-1},v$, for $v$ to keep a consistent label. But since in $H'$, all paths between $s,t$ are odd, $P_1+P_2...+P_k$ has an odd length hence this forces an odd number of swaps, a contradiction. Hence, this graph cannot be EFX orientable with these values, and applying the valid path substitutions on the weight 0 edges gives us that $H$ is not strongly EFX orientable as well. 
\end{proof}

We have now established that for 2-connected graphs, the necessary and sufficient condition for $G$ to be strongly EFX orientable is that either $G$ is bipartite or we can find a degree-2 vertex $v$ such that $G-v$ is bipartite. The question then remains how to handle graphs with a cut vertex. We can take the block decomposition of the graph $G$. If there are at least 2 nonbipartite blocks, this indicates either two disjoint odd cycles or two odd cycles meeting at exactly one vertex, which is forbidden. So, we can assume that there is exactly one nonbipartite block $B$ if $G$ is nonbipartite yet strongly EFX orientable. 

$B$ itself is 2-connected and strongly EFX orientable, and since it is nonbipartite, it is necessary for some $v \in B$ to exist such that $B-v$ is bipartite, with $v$ having 2 neighbors in $B$. The main issue, however, is that the cut vertices in $B$ (which may potentially include $v$) may have neighbors in other blocks which may affect the existence of an EFX orientation given a valuation, so it is not immediately clear whether this condition is sufficient. Astra was able to show, however, that this condition is indeed sufficient, which we will prove here and complete our characterization. 

\begin{lemma}
    Let $G$ be a graph with a block decomposition such that exactly one block $B$ is not bipartite and there exists a vertex $v \in B$ such that $d_{B}(v)=2$ and $B-v$ is bipartite. Then $G$ is strongly EFX orientable. 
\end{lemma}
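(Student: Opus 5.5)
The plan is to reduce to an explicit orientation in the spirit of the proof of Lemma \ref{lem:bipartite}, exploiting that every odd cycle of $G$ is confined to $B$ and passes through $v$. Let $e_1=vu_1$ and $e_2=vu_2$ be the two edges of $v$ inside $B$. Every cycle of $G$ lies in a single block, so every odd cycle lies in $B$ and passes through $v$. Since $d_B(v)=2$, every such cycle uses both $e_1$ and $e_2$. Hence $G-e_1$ and $G-e_2$ are both bipartite, and in a proper $2$-colouring of $G-e_i$ the edge $e_i$ is the unique monochromatic edge of $G$.

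Lemma \ref{lem:suf2} does not apply verbatim, because $v$ may be a cut vertex whose other incident edges (in bipartite blocks) are not odd-cycle edges. I will therefore split $G$ at $v$ and handle the pieces somewhat independently. Let $G_B$ be the union of $B$ with all components of $G-v$ that meet $B-v$, together with their edges to $v$. The remaining components of $G-v$, each joined with $v$, form bipartite graphs $D_1,\dots,D_r$ hanging at $v$. Because each $D_j$ meets the rest of $G$ only in $v$, an orientation that is EFX on each piece is EFX globally, except possibly for comparisons involving $v$. The only new envy to control is between $v$ and vertices of the $D_j$, and between $v$ and vertices of $G_B$, through the edges at $v$.

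For $G_B$, fix a $2$-colouring $(X,Y)$ of $G_B-e_1$, so that $v,u_1\in X$. Run the bipartite procedure: each vertex of $X$ takes its favourite edge, and every other edge goes to its $Y$-endpoint. The monochromatic edge $e_1$ must be given to $v$ or to $u_1$. I will give it to an endpoint whose favourite is $e_1$ itself if possible. Otherwise, I will use the freedom to recolour with respect to $e_2$ instead of $e_1$, and to let $v$ choose among its edges which one it keeps. The aim is that no vertex of $X$ holds two edges, one of which is valued by a $Y$-neighbour. For the hanging pieces $D_j$, I will choose for each one, independently, whether $v$ plays the role of an ``$X$'' vertex (picking) or a ``$Y$'' vertex (collecting) inside $D_j$, so as to match $v$'s situation in $G_B$. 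In particular, if $v$ already holds an edge in $G_B$, I will treat $v$ as a collector only where it receives no edge valued by others, or as a non-picker that receives nothing in $D_j$.

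I expect the main obstacle to be the verification at $v$ and at $u_1,u_2$. A vertex that holds a favourite edge plus the monochromatic edge, or $v$ holding edges from several blocks, can be envied up to one good by a $Y$-neighbour. This is exactly the failure mode behind Lemma \ref{lem:suf-tight}. The resolution I intend is a case analysis on where $v$'s overall favourite edge lies: in $e_1$, in $e_2$, or in some $D_j$. I will pick the colouring ($G-e_1$ versus $G-e_2$) and $v$'s role accordingly, so that the vertex receiving the monochromatic edge has that edge as its only valued good in the eyes of every envier. I will also use that simplicity of $G$ means any other vertex values at most one edge of any bundle. Each case then reduces to checking the ``at most one valued edge'' argument from the proof of Lemma \ref{lem:bipartite}.
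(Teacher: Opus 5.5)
\textbf{There is a genuine gap: the case where $v$'s favourite edge lies in a hanging piece is not resolved, and the rule you state for it fails.}

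Your set-up agrees with the paper's. Every odd cycle uses both $B$-edges $e_1=vu_1$, $e_2=vu_2$ of $v$. You run the pick/collect procedure of Lemma~\ref{lem:bipartite} on a colouring of $G_B-e_i$. You treat the bipartite pieces $D_j$ hanging at $v$ separately. When $v$'s overall favourite is $e_1$ or $e_2$, letting $v$ take it in the colouring of $G-e_i$ works; this is the paper's first case.

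The remaining case, where $v$'s favourite edge lies in some $D_j$, is where the substance of the lemma sits. Your proposal only announces a case analysis for it. To see the failure, take the triangle $vu_1u_2$ with pendant edges $vw$, $u_1a$, $u_2b$. Use additive values $f_v(vw)=10$, $f_v(vu_1)=2$, $f_v(vu_2)=1$, $f_{u_1}(u_1a)=f_{u_2}(u_2b)=5$, and $1$ for every other vertex and incident edge.
\begin{itemize}
\item No endpoint of $e_1$ or of $e_2$ has that edge as its favourite. So \emph{give $e_i$ to an endpoint whose favourite it is, otherwise recolour} produces nothing.
\item Handing the monochromatic edge to an endpoint that already holds its pick breaks EFX: $a$ envies $\{u_1a\}\subsetneq\{u_1a,u_1v\}$, or $w$ envies $\{vw\}\subsetneq\{vw,vu_1\}$.
\item Letting $v$ keep the wrong $B$-edge also fails. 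In the colouring of $G_B-e_2$, if $v$ keeps $vu_2$, then $u_1$ collects $\{vu_1,u_1u_2\}$ and $v$ envies $\{vu_1\}$. If $v$ also grabs $vw$, then $w$ envies $\{vw\}$ instead.
\end{itemize}
Your criterion for $v$'s role in a $D_j$ (\emph{collector only where it receives no edge valued by others}) is also not the operative one.

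The missing idea is to compare $v$'s two $B$-edges. In this case, label them so that $f_v(e_1)\ge f_v(e_2)$ and use the colouring of $G_B-e_1$.
\begin{itemize}
\item If $e_1$ is $u_1$'s favourite, give it to $u_1$, even though $e_1$ is $v$'s best edge inside $B$. Give $e_2$ to $u_2$. Let $v$ take its overall favourite in its $D_j$ and nothing else (picking side in every $D_j$). Then $v$ holds one edge and envies nobody.
\item If $e_1$ is not $u_1$'s favourite, give $e_1$ to $v$ anyway. Put $v$ on the collecting side of every $D_j$, with its neighbours there picking first.
\end{itemize}
In the second branch, the invariant that matters is that $v$ is unenvied, not that it receives no valued edge. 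Indeed, $u_1$ and those neighbours each picked something at least as good, and $u_2$ does not value $e_1$, so $v$ may hold many edges. Moreover, $v$ does not envy $u_2$, who collects $e_2$ together with other edges, precisely because $f_v(e_1)\ge f_v(e_2)$.

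This is the paper's second case with $x=u_1$ and $y=u_2$. The paper colours $G_2+v-yv$ with $v$ on the collecting side. That induces the same partition of $G_B-v$ as your colouring of $G_B-e_1$.
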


\begin{proof}
    This is trivial if $G$ is 2-connected, so we assume $G$ is not. Take any valuation on the edges of $G$. Let $xv,yv$ be the edges incident to $v$ in $B$, and assume without loss of generality that $f_v(xv) \geq f_v(yv)$. We first suppose that $xv$ is the favorite edge of $v$. We first delete $xv$ from $G$ and note that $G-xv$ is 2-colorable. Create a bipartition $A,F$ such that $x \in A$, and note that $v \in A$ as well (otherwise if $v \in F$, $G$ would be bipartite even with $xv$ added back). Assign $xv$ to $v$, and let every other vertex in $A$ pick its remaining favorite edge. Since $xv$ is the only edge with both endpoints in $A$, the rest of the unpicked edges have an endpoint in $F$, so we can orient them towards the vertex in $F$. No vertex $a \in A$ envies a vertex in $F$, as each vertex in $F$ received at most one edge $a$ cares about, which will be of at most as much value than the one it picked. Similarly, no envy exists between vertices of $F$ as $F$ is an independent set. Then, the only vertices that can be envied are vertices in $A$, but they received only one edge, so this orientation is EFX. 

    Now, assume that $xv$ is not the favorite edge of $v$. Then $v$ is a cut vertex of $G$, and let $G_1,G_2$ be a partition of the connected components of $G-v$ cut with $u \in G_2$ if and only if $u$ can reach $x,y$ after the deletion of $v$. We will orient $G_1+v$ and $G_2+v$ separately. We delete $yv$ from $G_2+v$ and bipartition the remaining graph into $A,F$ such that $x \in A$ and $y,v \in F$. Allow each vertex in $A$ to pick its favorite edge. 

    If $x$ picks a different edge than $xv$, then we assign $xv$ to $v$. We add $yv$ back and orient the rest of edges to a vertex in $F-v$. This is EFX, as no vertices in $A$ envy another vertex due to allowing them to pick their favorite edge. $F-v$ only has out-neighbors in $A$ which received exactly one edge, so no violations of EFX are induced there. $v$ does not envy $x$ or $y$ since it received its favorite edge in $G_2+v$, and is not envied since $x$ picked its favorite edge which was not $xv$. On $G_1+v$, bipartition it into $C,D$ so that $v \in D$, allow the vertices in $C$ to pick their favorite edge and orient the rest towards $D$. This gives us an EFX orientation, and since $v$ is not envied in any of the two graphs, it is safe for us to glue the two orientations at $v$, giving us our desired result. 

    Now, suppose that $x$ picks $xv$. Orient the rest of the edges (including $yv$) towards $F-v$. Note that the orientation between $A,F-v$ is clearly EFX, the main issue is that $v$ may envy $y$ yet $y$ could have received more than one edge. Bipartition $G_1+v$ into $C,D$ so that $v \in C$ and let every vertex in $C$ pick its favorite edge, and orient the rest towards $D$. Note that $v$ receives its favorite edge, because $xv$ was not its favorite edge so it must have been in $G_1+v$. Then this ensures a valid EFX orientation between $A,F$, as $v$ does not envy $y$ anymore. Moreover, since $v$ only has one incoming edge in all of $G$, this orientation on $G$ is EFX as only $A \cup C$ can be envied but all vertices there received exactly one edge. Hence, we can find an EFX orientation on $G$, which completes the proof of our characterization. 
\end{proof}

We note that we can find the block decomposition of a graph in $O(|V|+|E|)$ time \cite{tarjan}, hence this characterization yields the following corollary. 

\begin{corollary}
    Given a connected graph $G$, we can decide whether $G$ is strongly EFX orientable in $O(n(n+m))$ time, where $n$ is the number of vertices and $m$ is the number of edges in $G$. 
\end{corollary}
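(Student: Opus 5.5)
Given the characterization theorem, the decision procedure reduces to checking its two conditions, so I will describe an algorithm for that check and bound its running time. First, test whether $G$ is bipartite by BFS 2-coloring in $O(n+m)$ time; if it is, answer yes, which is correct by condition~1 (and Lemma~\ref{lem:bipartite}). Otherwise, compute the block decomposition in $O(n+m)$ time using \cite{tarjan}. Then 2-color each block separately, again by BFS, for a total of $O(n+m)$ time since the blocks partition the edge set. Count the nonbipartite blocks. If there are zero of them, $G$ would be bipartite, since every cycle lies inside a single block; this case cannot occur here. If there are at least two, answer no.

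If there is exactly one nonbipartite block $B$, compute $d_B(v)$ for every $v\in B$; this is read off from the edge lists of the blocks. For each candidate $v\in B$ with $d_B(v)=2$, test whether $G-v$ is bipartite by BFS in $O(n+m)$ time. There are at most $n$ candidates, so this loop costs $O(n(n+m))$. Answer yes exactly when some candidate passes. Correctness is immediate from the theorem. Connectivity of $G$ is assumed, and deleting $v$ may disconnect $G$, but bipartiteness of a disconnected graph is still checked componentwise by BFS, so no difficulty arises.

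As a remark, we could observe that it suffices to test $B-v$ rather than $G-v$. The other blocks are bipartite, and every cycle of $G-v$ lies within a single block of $G$ minus $v$. This would give an $O(n+m)$ bound, but the stated $O(n(n+m))$ bound only needs the naive loop. The only step needing any care is this accounting: the per-candidate bipartiteness check dominates, and all the other steps are linear.
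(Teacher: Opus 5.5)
Your argument is correct and is essentially the paper's algorithm: block decomposition, a BFS bipartiteness test for each block, then a loop over the vertices $v$ with $d_B(v)=2$ in the unique nonbipartite block $B$. The only differences are that the paper tests $B-v$ where you test $G-v$, which is equivalent for exactly the reason your remark gives, and that your initial bipartiteness test of $G$ is redundant but harmless. One caveat on that remark: testing $B-v$ instead of $G-v$ does not by itself give an $O(n+m)$ bound. There can still be $\Theta(|V(B)|)$ candidates, each costing $\Theta(|V(B)|+|E(B)|)$, for example when $B$ is a long odd cycle, so the naive loop remains quadratic. You should drop that claim or support it with a different algorithm.
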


\begin{proof}
    Apply the algorithm to obtain the block decomposition of the graph, which takes $O(n+m)$ time. For each block, test if it is bipartite using a BFS tree, which takes time linear in the number of vertices and edges in the block. Hence, testing all of the blocks takes $O(n+m)$ time in total. If at least two blocks are nonbipartite, return that $G$ is not strongly EFX orientable. If all blocks are bipartite, return that $G$ is strongly EFX orientable. If exactly one block $B$ is not bipartite, for every $v \in B$ where $v$ has two neighbors in $B$, test whether $B-v$ is bipartite. On a success return that $G$ is strongly EFX orientable, and if all tests fail return that $G$ is not strongly EFX orientable. This step takes $O(n(n+m))$ time, so in total this algorithm runs in $O(n(n+m))$ time. 
\end{proof}

\bibliography{references}

\end{document}